\DeclarePairedDelimiter\floor{\lfloor}{\rfloor}
\newtheorem{assumption}{Assumption}
\newtheorem{theorem}{Theorem}
\newtheorem{definition}{Definition}
\newtheorem{proposition}{Proposition}
\newtheorem{corollary}{Corollary}
\DeclareMathOperator*{\argmax}{arg\,max}
\DeclareMathOperator*{\argmin}{arg\,min}
\let\emptyset\varnothing
\newcolumntype{Y}{>{\centering\arraybackslash}X}
\def\BibTeX{{\rm B\kern-.05em{\sc i\kern-.025em b}\kern-.08em
     T\kern-.1667em\lower.7ex\hbox{E}\kern-.125emX}}
\begin{document}
\title{Device Sampling for Heterogeneous Federated Learning: Theory, Algorithms, and Implementation\vspace{-0.1in}}

\author{\IEEEauthorblockN{\normalsize Su Wang\IEEEauthorrefmark{1}, Mengyuan Lee\IEEEauthorrefmark{2}, Seyyedali Hosseinalipour\IEEEauthorrefmark{1}, Roberto Morabito\IEEEauthorrefmark{3}, Mung Chiang\IEEEauthorrefmark{1}, and Christopher G. Brinton\IEEEauthorrefmark{1}}
\IEEEauthorblockA{\small \IEEEauthorrefmark{1}School of Electrical and Computer Engineering, Purdue University\\
\IEEEauthorrefmark{2}College of Information Science and Electronic Engineering, Zhejiang University\\
\IEEEauthorrefmark{3}Department of Electrical Engineering, Princeton University, and Ericsson Research\\
Email: \IEEEauthorrefmark{1}\{wang2506, hosseina, chiang, cgb\}@purdue.edu,
\IEEEauthorrefmark{2}mengyuan\textunderscore lee@zju.edu.cn,
\IEEEauthorrefmark{3}roberto.morabito@princeton.edu}
\vspace{-0.4in}}

\maketitle
\begin{abstract}
The conventional federated learning (FedL) architecture distributes machine learning (ML) across worker devices by having them train local models that are periodically aggregated by a server. FedL ignores two important characteristics of contemporary wireless networks, however: (i) the network may contain heterogeneous communication/computation resources, while (ii) there may be significant overlaps in devices' local data distributions. In this work, we develop a novel optimization methodology that jointly accounts for these factors via intelligent device sampling complemented by device-to-device (D2D) offloading. Our optimization aims to select the best combination of sampled nodes and data offloading configuration to maximize FedL training accuracy subject to realistic constraints on the network topology and device capabilities. Theoretical analysis of the D2D offloading subproblem leads to new FedL convergence bounds and an efficient sequential convex optimizer. Using this result, we develop a sampling methodology based on graph convolutional networks (GCNs) which learns the relationship between network attributes, sampled nodes, and resulting offloading that maximizes FedL accuracy. Through evaluation on real-world datasets and network measurements from our IoT testbed, we find that our methodology while sampling less than 5\% of all devices outperforms conventional FedL substantially both in terms of trained model accuracy and required resource utilization.

\end{abstract}
\section{Introduction}

\noindent  
The proliferation of smartphones, unmanned aerial vehicles (UAVs), and other devices comprising the Internet of Things (IoT) is causing an exponential rise in data generation and large demands for machine learning (ML) at the edge~\cite{8373692}. 
For example, sensor and camera modules on self-driving cars produce up to 1.4 terabytes of data per hour~\cite{carData1} with the objective of training ML models for intelligent navigation. The traditional paradigm in ML of centralized training at a server is often not feasible in such environments since (i) transferring these large volumes of data from the devices to the cloud imposes long transfer delays and (ii) users are sometimes unwilling to share their data due to privacy concerns~\cite{7498684}. 

Federated learning (FedL) is a recently proposed distributed ML technique aiming to overcome these challenges~\cite{mcmahan2017communication,konevcny2016federated}.
Under FedL, devices train models on their local datasets, typically by means of gradient descent, and a server periodically aggregates the parameters of local models to form a global model. This global model is then transferred back to the devices for the next round of local updates, as depicted in Fig.~\ref{fig:simpleFL}.
In conventional FedL, each device processes its own collected data and operates independently within an aggregation period. This will become problematic in terms of upstream device communication and local device processing requirements, however, as its implementations scale to networks consisting of millions of heterogeneous wireless devices~\cite{niknam2020federated,hosseinalipour2020federated}.

At the same time, device-to-device (D2D) communications that are becoming part of 5G and IoT can enable local offloading of data processing from resource hungry to resource rich devices~\cite{9155510}. Additionally, we can expect that for particular applications, the datasets collected across devices will contain varying degrees of similarity, e.g.,
images gathered by UAVs conducting surveillance over the same area~\cite{9084352,kairouz2019advances}. Processing similar data distributions at multiple devices adds overhead to FedL and an opportunity for efficiency improvement.

\begin{figure}[t]
\includegraphics[width=.43\textwidth]{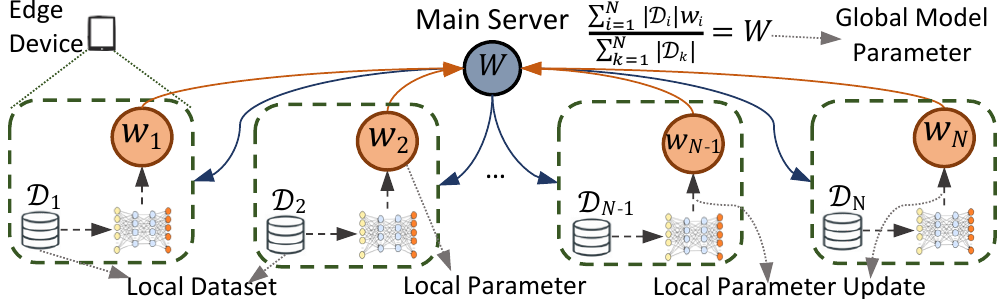}
\centering
\caption{Architecture of conventional federated learning (FedL).}
\label{fig:simpleFL}
\vspace{-6mm}
\end{figure}

Motivated by this, we develop a novel methodology for \textit{smart device sampling with data offloading} in FedL. Specifically, we formulate a joint sampling and data offloading optimization problem where devices expected to maximize contribution to model training are sampled for training participation, while devices that are not selected may transfer data to those that are. This data offloading is performed according to estimated data dissimilarities between nodes, which are updated as transfers are observed. We show that our methodology yields superior model performance to conventional FedL while significantly reducing network resource utilization. In our model motivated by paradigms such as \textit{fog learning}~\cite{hosseinalipour2020federated,hosseinalipour2020multi,tu2020network}, data offloading only occurs among trusted devices; devices that have privacy concerns are exempt from data offloading. 
\vspace{-1mm}
\subsection{Related Work}
To improve the communication efficiency of FedL, recent works have focused on 
efficient encoding designs to reduce parameter transmission sizes~\cite{9155479,sattler2019robust}, optimizing the frequency of global aggregations~\cite {8486403,wang2019adaptive}, and device sampling \cite{ji2020dynamic,nguyen2020fast}. Our work falls into the third category. In this regard, most works have assumed a static or uniform device selection strategy, e.g.,~\cite{wang2019adaptive,hosseinalipour2020multi,yang2019energy,tran2019federated,tu2020network,konevcny2016federated,sahu2018convergence,reisizadeh2020fedpaq,ji2020dynamic,nguyen2020fast}, 
where the main server chooses a subset of devices either uniformly at random or according to a pre-determined sampling distribution. There is also an emerging line of work on device sampling based on wireless channel characteristics, specifically in cellular networks~\cite{8851249,shi2019device,xia2020multi,ren2020scheduling,he2020importance}. By contrast, we develop a sampling technique that adapts to the heterogeneity of device resources and overlaps across local data distributions that are key characteristics of contemporary wireless edge networks. Also, we study device sampling based on the utility of device data distributions. Specifically, when compared to the limited literature on device sampling by each device's instantaneous contributions to global updates~\cite{9155494}, we introduce a novel perspective based on device data similarities. Our methodology exploits the proliferation of D2D communications at the wireless edge~\cite{tu2020network,hosseinalipour2020multi}, to diversify each selected device's local data via D2D offloading. Our work thus considers the novel problems of sampling and D2D offloading for FedL jointly, and leads to new analytical convergence bounds and algorithms used by implementations. 

It is worth mentioning two parallel lines of work in FedL that consider relationships between node data distributions. One is on fairness \cite{williamson2019fairness}, in which the objective is to train the ML model without biasing the result towards any one device's distribution, e.g.,~\cite{mohri2019agnostic,li2019fair}. Another leverages transfer learning techniques~\cite{pan2009survey} to build models across data parties (e.g., companies or enterprises) that possess partial overlaps~\cite{li2019fedmd,liu2018secure,gao2019privacy}. Our work is focused on a fundamentally different objective, i.e., network resource efficiency optimization.


\begin{figure}[t]
\includegraphics[width=.48\textwidth]{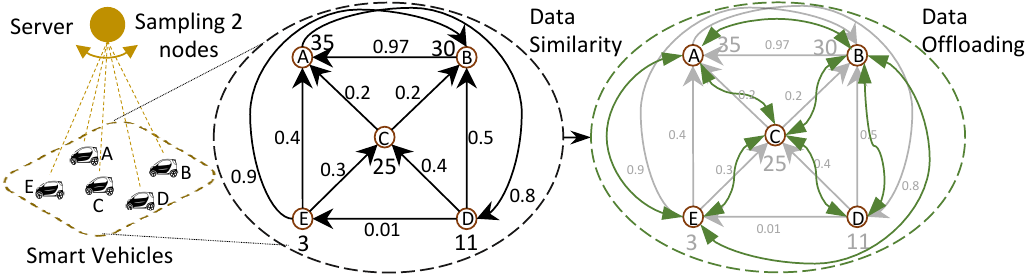}
\centering
\caption{Small motivating example of a wireless network composed of 5 smart cars and an edge server. The server can only sample two of the cars to participate in FedL training.}
\label{fig:SimilarOffload}
\vspace{-5mm}
\end{figure}
\vspace{-1mm}
\subsection{Motivating Toy Example}\label{sec:toy}
Consider Fig.~\ref{fig:SimilarOffload}, wherein five heterogeneous smart cars communicate with an edge server to train an object detection model. Due to limited bandwidth, the server can only exploit $2$ out of the $5$ cars to conduct FedL training, but needs to train a model representative of the entire dataset within this network. 
The computational capability of each car, i.e., the number of processed datapoints in one aggregation period, is shown next to itself, and the edge weights in the data similarity graph capture the similarity between the local data of the cars. 
Rather than using statistical distance metrics~\cite{liese2006divergences}, which are hard to compute in this distributed scenario, the data similarities could be estimated by commute routes and geographical proximity~\cite{jeske2013floating}. Also, in D2D-enabled environments, nodes can exchange small data samples with trusted neighbors to calculate similarities locally and report them to the server.

In Fig.~\ref{fig:SimilarOffload}, if the server samples the cars with the highest computational capabilities, i.e., $A$ and $B$, the sampling is inefficient due to the high data similarity between them. Also, if it samples those with the lowest similarity, i.e., $D$ and $E$, the local models will be based on low computational capabilities, which will often result in a low accuracy (and could be catastrophic in this vehicular scenario). Optimal sampling of the cars considering both data similarities and computational capabilities is thus critical to the operation of FedL.


We take this one step further to consider how D2D offloading can lead to augmented local distributions of sampled cars.
The node sampling must consider the neighborhoods of different cars and the capability of data offloading in those neighborhoods: D2D is cheaper in terms of resource utilization among those that are close in proximity, for example. The feasible offloading topology in Fig.~\ref{fig:SimilarOffload} is represented by the data offloading graph. Given $C$'s high processing capability and data dissimilarity with neighboring cars $E$ and $D$, sampling $C$ in a D2D-optimized solution can yield a composite of all three cars' distributions. The purpose of this paper is to model these relationships for a general wireless edge network and optimize the resulting sampling and offloading configurations.

\vspace{-1.1mm}
\subsection{Outline and Summary of Contributions}

\begin{itemize}[leftmargin=4mm]

\item We formulate the joint sampling and D2D offloading optimization problem for maximizing FedL model accuracy subject to realistic network resource constraints (Sec.~\ref{s:sm}). 

\item Our theoretical analysis of the offloading subproblem for a fixed sampling strategy yields a new upper bound on the convergence of FedL under an arbitrary data sampling strategy (Sec.~\ref{s:p1}). Using this bound, we derive an efficient sequential convex optimizer for the offloading strategy.

\item We propose a novel ML-based methodology that learns the desired combination of sampling and resulting offloading (Sec.~\ref{s:p2}). We encapsulate the network structure and offloading scheme into model features and learn a mapping to the sampling strategy that maximizes expected FedL accuracy.

\item We evaluate our methodology through experiments on real-world ML tasks with network parameters from our testbed of wireless IoT devices (Sec.~\ref{s:numRes}). Our results demonstrate model accuracies that exceed FedL trained on all devices with significant reductions in processing requirements.
\end{itemize}
\vspace{-1mm}
\section{System and Optimization Model} \label{s:sm}



\noindent In this section, we formulate the joint sampling and offloading optimization (Sec.~\ref{ss:ovr_problem}). We first introduce our edge device (Sec.~\ref{sss:devices}), network (Sec.~\ref{sss:graph}), and ML (Sec.~\ref{ss:mlp}) models.


\subsection{Edge Device Model} \label{sss:devices}
We consider a set of devices $\mathcal{N} = \{1,\cdots,N\}$ connected to a server, and time span $t=0,\cdots,T$ for model training. Each device $i \in \mathcal{N}$ possesses a data processing capacity $P_i(t) \geq 0$, which limits the number of datapoints it can process for training at time $t$, and a unit data processing cost $p_i(t) \geq 0$. Intuitively, $p_i(t)$, $P_i(t)$ are related to the total CPU cycles, memory (RAM), and power available at device $i$~\cite{morabito2018legiot}. These factors are heterogeneous and time-varying, e.g., as battery power fluctuates and as each device becomes occupied with other tasks. Additionally, for each $i \in \mathcal{N}$, we define $\Psi_{i}(t) > 0$ as the data transmit budget, and $\psi_{i,j}(t) > 0$ as the unit data transmission cost to device $j$. 
Intuitively, $\psi_{i,j}(t)$, $\Psi_{i}(t)$ are dependent on factors such as the available bandwidth and the channel interference conditions. For example, devices that are closer in proximity would be expected to have lower $\psi_{i,j}(t)$.

Due to resource constraints, the server selects a set $\mathcal{S} \subseteq \mathcal{N}$ of devices to participate in FedL training. Some devices $i \in \hat{\mathcal{S}} \triangleq \mathcal{N} \setminus \mathcal{S}$ may be stragglers, i.e., possessing insufficient $P_i(t)$ to participate in training, but nonetheless gather data. Different from most works, our methodology will seek to leverage the datasets captured by nodes in the unsampled set $\hat{\mathcal{S}}$ via local D2D communications with nodes in the sampling set $\mathcal{S}$.

We denote the dataset at device $i$ for the specific ML application by $\mathcal{D}_i(t)$. $\mathcal{D}_i(0)$ is the initial data at $i$, which evolves as offloading takes place. Henceforth, we use calligraphic font (e.g., $\mathcal{D}_i(t)$) to denote a set, and non-calligraphic (e.g., $D_i(t) = |\mathcal{D}_i(t)|$) to denote its cardinality. Each data point $d \in \mathcal{D}_i(t)$ is represented as $d=(\mathbf{x}_d,y_d)$, where $\mathbf{x}_d\in \mathbb{R}^M$ is a feature vector of $M$ features, and $y_d\in\mathbb{R}$ is the target label.
\vspace{-1mm}
\subsection{Network Topology and Data Similarity Model} \label{sss:graph}
We consider a time-varying network graph $G = (\mathcal{N},\mathcal{E}(t))$, among the set of nodes $\mathcal{N}$ to represent the available D2D topology. Here, $\mathcal{E}(t)$ denotes the set of edges or connections between the nodes, where $(i,j)\in\mathcal{E}(t)$ if node $i$ is able/willing to transfer data in D2D mode to node $j$ at time $t$, depending on e.g., the trust between the devices, and whether the devices are D2D-enabled. For instance, smart home peripherals can likely transfer data to their owner's smartphone, while certain smart cars in the vehicular network in Fig.~\ref{fig:SimilarOffload} may be unwilling to communicate. 
We capture these potential D2D relationships using the adjacency matrix $\mathbf{A}(t) =[A_{i,j}(t)]_{1\leq i,j\leq N}$, where $A_{i,j}(t)=1$ if $(i,j)\in\mathcal{E}(t)$, and $A_{i,j}(t)=0$ otherwise.

We define $\Phi_{i,j}(t) \in [0,1]$ as the fraction of node $i$'s data offloaded to node $j$ at time $t$. To optimize this, we are interested in the similarity among local datasets. We define the similarity matrix $\boldsymbol{\lambda}(t) \triangleq [\lambda_{i,j}(t)]_{1\leq i,j\leq N}$ among the nodes at time $t$, where $0\leq\lambda_{i,j}(t)\leq 1$. Higher values of $\lambda_{i,j}(t)$ imply a higher dataset similarity between nodes $i$ and $j$, and thus less offloading benefit. 
In practice, neither the server nor the devices have exact knowledge of the local data distributions. To this end, we consider a probabilistic interpretation of similarity where $\lambda_{i,j}(t)$ is defined based on the probability that a data point $d_i$ sampled i.i.d from $\mathcal{D}_i(t)$ is ``similar" to at least one data point $d_j \in \mathcal{D}_j(t)$. Two datapoints $d_i$ and $d_j$ are considered  similar if (i) they have the same label $y_{d_i} = y_{d_j}$, and (ii) their feature vectors satisfy $\Vert \mathbf{x}_{d_i}-\mathbf{x}_{d_j} \Vert \leq \epsilon_0$, where $\epsilon_0 > 0$ can be application-specific. Due to device dataset heterogeneity, in general, data similarity will not be symmetric. 

For $t>0$, $\lambda_{i,j}(t)$ is defined from $\lambda_{i,j}(t-1)$ according to the offloading behavior, as we will explain in Sec.~\ref{ss:ovr_problem}. Estimates of the initial $\lambda_{i,j}(0)$, i.e., before any offloading takes place, can be obtained in several ways that avoid transferring large volumes of data. 
We assume that device $i$ will broadcast a random sample of $\mathcal{D}_i(0)$ to its neighbor $j$, which can then estimate $\lambda_{i,j}(0)$ by determining the fraction in this sample that are similar to a $d \in \mathcal{D}_j(0)$. 
To capture both the node connectivity and data similarity jointly for offloading, we also define the connectivity-similarity matrix $\boldsymbol{\Lambda}(t) = [\Lambda_{i,j}(t)]$, and $\boldsymbol{\Lambda}(t) \triangleq \boldsymbol{\lambda}(t) \circ \mathbf{A}(t)$, where $\circ$ represents the Hadamard product.


\vspace{-1mm}
\subsection{Distributed Machine Learning Model} \label{ss:mlp}
The learning objective of FedL is to train a global ML model parameterized by a vector $\mathbf{w} \in \mathbb{R}^p$ (e.g., $p$ weights in a neural network) using the devices participating in training. Formally, for $t \in \{0, \cdots, T\}$, each sampled device $i \in \mathcal{S}$ is concerned with updating its model parameter vector $\mathbf{w}_i(t)$ on its local dataset $\mathcal{D}_i(t)$. The local loss at device $i \in \mathcal{N}$ is defined as
\begin{align}
    F(\mathbf{w}_i(t)|\mathcal{D}_i(t)) = \frac {\sum_{d \in \mathcal{D}_i(t)} f(\mathbf{w}_i(t),\mathbf{x}_d,y_d)}{D_i(t)},
\end{align}
where $f(\mathbf{w}_i(t),x_d,y_d)$ denotes the corresponding loss (e.g., squared error for a regression problem, cross-entropy for a classification problem~\cite{wang2019adaptive}) of each datapoint $d \in \mathcal{D}_i$. Each device minimizes its local loss sequentially via gradient descent:
\begin{equation} \label{eq:gd}
    \mathbf{w}_i(t) = \mathbf{w}_i(t-1) - \eta \nabla F(\mathbf{w}_i(t-1)|\mathcal{D}_i(t)),
\end{equation}
where $\eta > 0$ is the step size and $\nabla F(\mathbf{w}_i(t-1)|\mathcal{D}_i(t))$ is the average gradient over the local dataset $\mathcal{D}_i(t)$. With periodicity $\tau$, the server performs a weighted average of $\mathbf{w}_i (k\tau)$, $i \in \mathcal{S}$:
\begin{equation}\label{eq:w_agg}
    \mathbf{w}_{\mathcal{S}}(k\tau) = \frac{\sum_{i \in \mathcal{S}}  \Delta_i(k\tau) \mathbf{w}_i(k\tau)}{\sum_{i \in \mathcal{S}} \Delta_i(k\tau)},
\end{equation} 
where $k$ denotes the $k$-th aggregation, $k \in \{1,\cdots,K \}$, $K = \lceil T/\tau \rceil$, and $\Delta_{i}(k\tau) \triangleq  \sum_{t=(k-1)\tau+1}^{k\tau}D_i(t)$ denotes the total data located at node $i$ between $k-1$ and $k$. The server then synchronizes the sampled devices: $\mathbf{w}_i(k\tau) \leftarrow \mathbf{w}_\mathcal{S}(k\tau)$, $\forall i \in \mathcal{S}$. 

Since we are concerned with the performance of the global parameter $\mathbf{w}_{\mathcal{S}}$, we  define $\mathbf{w}_{\mathcal{S}}(t)$ as the weighted average of $\mathbf{w}_i(t)$ as in~\eqref{eq:w_agg} for each time $t$, though it is only computed at the server when $t = k\tau$.
The global loss that we seek to optimize considers the loss over all the datapoints in the network:  
\begin{align} \label{eq:Lws}
\hspace{-4mm}
F\hspace{-0.5mm}\left(\mathbf{w}_{\mathcal{S}}(t)|\mathcal{D}_{\mathcal{N}}(t) \right)
= \frac{\sum_{i \in \mathcal{N} } D_i(t) F(\mathbf{w}_{\mathcal{S}}(t) \vert \mathcal{D}_i(t))}{D_{\mathcal{N}}(t)},
\hspace{-2mm}
\end{align}
where $\mathcal{D}_{\mathcal{N}}(t)$ denotes the multiset of the datasets of all the devices at time $t$, and $D_{\mathcal{N}}(t) \triangleq \sum_{i \in \mathcal{N}} D_i(t)$. 

\subsection{Joint Sampling and Offloading Optimization} \label{ss:ovr_problem}

The goal of our optimization is to select (i) the subset of devices $\mathcal{S}^{\star}$ to sample from a total budget of $S$ and (ii) the offloading ratios $\Phi^{\star}_{i,j}(t)$ between the devices to minimize the loss associated with $\mathbf{w}_\mathcal{S}(t)$. We consider a time average for the objective, as devices may rely on intermediate aggregations for real-time inferences. For the variables, we define the binary vector $\mathbf{x} \triangleq (x_1,\cdots,x_N)$ to represent device sampling status, i.e., if $i \in \mathcal{S}$ then $x_i = 1$, otherwise $x_i=0$, and matrix $\boldsymbol{\Phi}(t) \triangleq [\Phi_{i,j}(t)]_{1 \leq i,j \leq N}$ to represent the offloading ratios at time $t$. The resulting optimization problem $\boldsymbol{\mathcal{P}}$ is as follows:
\vspace{-4mm}

{ \small 
\begin{align}
 &\hspace{-1mm} (\boldsymbol{\mathcal{P}}):~ 
 \underset{\mathbf{x},\{\boldsymbol{\Phi}(t)\}_{t=1}^{T}}{\textrm{minimize}}~
\frac{1}{T}\sum_{t=1}^{T} F(\mathbf{w}_\mathcal{S}(t)|\mathcal{D}_{\mathcal{N}}(t)) \label{eq:obj}\hspace{-1mm} \\
&\hspace{-1mm}  \textrm{subject to}\nonumber\\ 
&\hspace{-1mm}  D_i(t) = D_i(t-1) + R_{i}(t), ~i\in\mathcal{N}, \label{eq:con1} \\
&\hspace{-1mm}  R_{i}(t) = \sum_{k \in \mathcal{N}} D_k(t-1)\Phi_{k,i}(t)(1-{\Lambda}_{k,i}(t-1)),~i\in\mathcal{N}, \label{eq:con2} \\ 
&\hspace{-1mm}  \Lambda_{k,i}(t) = \Lambda_{k,i}(t-1) + (1-\Lambda_{k,i}(t-1))\Phi_{k,i}(t),~ i,k\in\mathcal{N}, \label{eq:con3} \\
&\hspace{-1mm}  R_{i}(t) \leq \theta_i(t), ~i\in\mathcal{N}, \label{eq:con4}\\
&\hspace{-1mm}  p_i(t) D_i(t) \leq P_i(t), ~i\in\mathcal{N}, \label{eq:con5}\\
&\hspace{-1mm}  D_k(t-1) \sum_{i\in \mathcal{N}} x_i\Phi_{k,i}(t)\psi_{k,i}(t) \leq \Psi_{k}(t), ~k\in\mathcal{N}, \label{eq:con6}\\
&\hspace{-1mm} \sum_{i \in \mathcal{N}} \Phi_{k,i}(t) \leq 1, ~k\in\mathcal{N}, \label{eq:con7} \\ 
&\hspace{-1mm} (1-x_i)(1-x_k)\Phi_{k,i}(t) = 0, ~i,k \in \mathcal{N}, \label{eq:con8}\\ 
&\hspace{-1mm} x_k\Phi_{k,i}(t) = 0, ~i,k\in \mathcal{N}, \label{eq:con9}\\ 
&\hspace{-1mm} (1-A_{k,i}(t)) \Phi_{k,i}(t) = 0, ~i,k \in \mathcal{N}, \label{eq:con10}\\
&\hspace{-1mm} \sum_{i\in \mathcal{N}} x_i=S,\label{eq:con11}\\
&\hspace{-1.1mm}  {\Lambda}_{k,i}(t) \geq 0, \Lambda_{k,i}(t) \leq 1, \Phi_{k,i}(t) \geq 0,~x_i \in \{0,1\},~i,k\in\mathcal{N}\hspace{-.2mm}. \label{eq:con12}\hspace{-3mm}
\end{align} }

\vspace{-0.22in}
\noindent The data at sampled devices, i.e., $D_i(t)$ for $i \in \mathcal{S}$, changes over time in~\eqref{eq:con1} based on the total received data $R_i(t)$ for device $i$. $R_i(t)$ is determined in~\eqref{eq:con2} by scaling the data transmissions from $k \in \hat{\mathcal{S}}$ to device $i$ according to the similarity. In response to the data offloading, the connectivity-similarity matrix is updated in~\eqref{eq:con3}. Together,~\eqref{eq:con2} and~\eqref{eq:con3} capture similarity-aware D2D offloading, which we explain further in the following paragraph. Next,~\eqref{eq:con4}-\eqref{eq:con6} ensure that our D2D offloading solution adheres to device receive capacities $\theta_i(t)$, data processing limits $P_i(t)$, and D2D limits $\Psi_i(t)$. \eqref{eq:con7} ensures total offloaded data by device $k$ does not exceed its local dataset size. Through~\eqref{eq:con8}-\eqref{eq:con10}, offloading only occurs between single-hop D2D neighbors from $k \in \hat{\mathcal{S}}$ to $i \in \mathcal{S}$. \eqref{eq:con11} maintains compliance with the desired sampling size, i.e., $|\mathcal{S}^{\star}| = S$. 


\textbf{Similarity-aware D2D offloading:} The amount of raw data device $i$ receives from $k$ is $\Phi_{k,i}(t) D_{k}(t-1)$. 
Ideally, device $i$ will receive data that is dissimilar to $\mathcal{D}_i(0)$. 
However, neither $i$ nor $k$ have full knowledge of each others' datasets in this distributed scenario (nor does the server). Therefore, data offloading in $\boldsymbol{\mathcal{P}}$ is conducted through an i.i.d. selection of $\Phi_{k,i}(t) D_{k}(t-1)$ data points from $k$ to send to $i$. The estimated overlapping data that arrives at $i$ is $\Phi_{k,i}(t) D_{k}(t-1) \Lambda_{k,i}(t-1)$, and the resulting useful data is $\Phi_{k,i}(t) D_{k}(t-1)(1-\Lambda_{k,i}(t-1))$. 
We therefore adjust $\Lambda_{k,i}(t)$ by the effective fraction of data $\Phi_{k,i}(t) (1-\Lambda_{k,i}(t-1))$ offloaded from $k$ to $i$, resulting in \eqref{eq:con3}. In particular, when $k$ transfers all of its data to $i$ (i.e., $\Phi_{k,i}(t)=1$), $\Lambda_{k,i}(t)$ becomes 1, preventing further data offloading from $k$ to $i$ according to~\eqref{eq:con2}. Imposing these constraints promotes data diversity among the sampled nodes through offloading. 

\textbf{Solution overview:} Problem~$\boldsymbol{\mathcal{P}}$ faces two major challenges: (i) it requires a concrete model of the loss function with respect to the datasets, which is, in general, intractable for deep learning models~\cite{goodfellow2016deep}, and (ii) even if the loss function is known, the coupled sampling and offloading procedures make this problem an NP-hard mixed integer programming problem. To overcome this, we will first consider the offloading subproblem for a fixed sampling strategy, and develop a sequential convex programming method to solve it in Sec.~\ref{s:p1}. Then, we will integrate this into a graph convolutional network (GCN)-based methodology that learns the relationship between the network properties, sampling strategy (with its corresponding offloading), and the resulting FedL model accuracy in Sec.~\ref{s:p2}. An overall flowchart of our methodology is given in Fig.~\ref{fig:BigPicture}.

\vspace{-1mm}
\section{Developing the Offloading Optimizer} 
\label{s:p1}
\noindent In this section, we study the offloading optimization subproblem of $\boldsymbol{\mathcal{P}}$. Our theoretical analysis of~\eqref{eq:obj} under common assumptions will yield an efficient approximation of the objective in terms of the offloading variables (Sec.~\ref{ss:p1b}). We will then develop our solver for the resulting optimization (Sec.~\ref{ss:p1c}). 

\vspace{-1mm}
\subsection{Definitions and Assumptions} \label{ss:p1def}
To aid our theoretical analysis of FedL, similar to \cite{wang2019adaptive}, we will consider a hypothetical ML training process that has access to the entire dataset $\mathcal{D}_{\mathcal{N}}(t)$ at each time instance. The parameter vector $\mathbf{v}_k(t)$ for this centralized model is trained as follows: (i) at each global aggregation $t = k\tau$, $\mathbf{v}_k(t)$ is synchronized with $\mathbf{w}_{\mathcal{S}}(t)$, i.e., $\mathbf{v}_k(t) \leftarrow \mathbf{w}_{\mathcal{S}}(t)$, and (ii) in-between global aggregation periods, $\mathbf{v}_k(t)$ is trained based on gradient descent iterations to minimize the global loss $F(\mathbf{v}_k(t) \vert \mathcal{D}_{\mathcal{N}}(t))$. 
\begin{definition} [Difference between sampled and unsampled gradients] \label{def:e} 
We define the instantaneous difference between $\nabla F(\mathbf{w}_{\mathcal{S}}(t)\vert \mathcal{D}_{\mathcal{N}}(t))$, the gradient with respect to the full dataset across the network, and $\nabla F(\mathbf{w}_{\mathcal{S}}(t)\vert \mathcal{D}_{\mathcal{S}}(t))$, the gradient with respect to the sampled dataset, as:
\begin{equation}
\hspace{-0.0mm}
\begin{aligned}  \label{eq:l1_result} 
&\zeta(\mathbf{w}_{\mathcal{S}}(t)) \triangleq \frac{G_{\mathcal{S}}(t)}{D_{\mathcal{S}}(t)} - \frac{\sum_{i \in \mathcal{N}} D_i(t) \nabla F(\mathbf{w}_{\mathcal{S}}(t)|\mathcal{D}_i(t))}{D_{\mathcal{N}}(t)},
\end{aligned}
\hspace{-4mm}
\end{equation} 
where~$G_{\mathcal{S}}(t) \triangleq \sum_{i \in \mathcal{S}}D_i(t)\nabla F(\mathbf{w}_{\mathcal{S}}(t)|\mathcal{D}_i(t))$ is the scaled sum of gradients on the sampled datasets, and~$D_{\mathcal{S}}(t) = \sum_{i \in \mathcal{S}}D_i(t)$ is the total data across the sampled devices.
\end{definition}


\begin{definition}[Difference between sampled and unsampled gradients] We define $\delta_i(t)$ as the upper bound between the gradient computed on $\mathcal{D}_i(t)$ for $i \in \mathcal{S}$ and $\mathcal{D}_{\mathcal{N}}(t)$ at time $t$: 
\label{def:deltai}

\vspace{-0.16in}
\small
\begin{equation}
\hspace{-.01mm}
     \Vert \nabla F(\mathbf{w}_{\mathcal{S}}(t)|\mathcal{D}_i(t)) \hspace{-.45mm}- \hspace{-.45mm} \nabla F(\mathbf{w}_{\mathcal{S}}(t)|\mathcal{D}_{\mathcal{N}}(t))\hspace{-.45mm} -\hspace{-.45mm} \zeta(\mathbf{w}_{\mathcal{S}}(t)) \hspace{-.4mm}\Vert\hspace{-.5mm}\leq\hspace{-.5mm} \delta_i(t).
    \hspace{-14mm}
\end{equation}
\end{definition}


\vspace{-0.05in}
We also make the following standard assumptions~\cite{wang2019adaptive,tu2020network} on the loss function $F(\mathbf{w})$ for the ML model being trained: 
\begin{assumption} 
We assume $F(\mathbf{w})$ is convex with respect to $\mathbf{w}$, L-Lipschitz, i.e., $\Vert F(\mathbf{w}) - F(\mathbf{w}^{\prime}) \Vert \leq L\Vert \mathbf{w} - \mathbf{w}^{\prime}\Vert $, and  $\beta$-smooth, i.e., $\Vert \nabla F(\mathbf{w}) - \nabla F(\mathbf{w}^{\prime}) \Vert \leq \beta \Vert \mathbf{w} - \mathbf{w}^{\prime}\Vert$, $\forall \mathbf{w}, \mathbf{w}^{\prime}$.
\end{assumption}
Despite these assumptions, we will show in Sec.~\ref{s:numRes} that our results still obtain significant improvements in practice for neural networks which do not obey the above assumptions.
\vspace{-1mm}
\subsection{Upper Bound on Convergence} \label{ss:p1b}
For convergence analysis, we assume that devices only offload the same data once, and assume that recipient nodes always keep received data. This must be done to ensure that the optimal global model parameters remain constant throughout time.
The following theorem gives an upper bound on the difference between the parameters of sampled FedL and those from the centralized learning, i.e., $\Vert \mathbf{w}_{\mathcal{S}}(t) - \mathbf{v}_k(t) \Vert$, over time:

\begin{theorem}[Upper bound on the difference between sampled FedL and centralized learning]\label{thm:error} 
Assuming $\eta \leq {\beta}^{-1}$, the upper-bound on the difference between $\mathbf{w}_{\mathcal{S}}(t)$ and $\mathbf{v}_k(t)$ within the local update period before the $k$-th global aggregation, $t \in \{(k-1)\tau+1,...,k\tau\}$, is given by:
\vspace{-4.5mm}

\small
\begin{equation} \label{th1:1}
\hspace{-0mm}
    \Vert \mathbf{w}_{\mathcal{S}}(t) - \mathbf{v}_{k}(t)\Vert \leq \frac{1}{\beta} \hspace{-5mm} \sum_{y = (k-1)\tau+1}^{t} \hspace{-1mm} \bigg(\Upsilon (y,k) + \Vert \zeta(\mathbf{w}_{\mathcal{S}}(y-1))\Vert\bigg),
    \hspace{-3mm}
\end{equation}
\normalsize
where $\Upsilon (y,k) \triangleq \delta_{\mathcal{S}}(y) (2^{y-1-(k-1)\tau}-1)$,
and
\begin{equation} \label{th1:3}
   \delta_{\mathcal{S}}(t) \triangleq \left({\sum_{i \in \mathcal{S}}D_i(t)\delta_i(t)}\right)\left({\sum_{i \in \mathcal{S}}D_i(t)}\right)^{-1}.
\end{equation} 
\end{theorem}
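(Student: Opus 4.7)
I will induct on $t$ across the interval $\{(k-1)\tau+1,\ldots,k\tau\}$. Setting $s=(k-1)\tau$, the base case holds because the server synchronizes $\mathbf{v}_k(s)\leftarrow \mathbf{w}_\mathcal{S}(s)$, making both sides of \eqref{th1:1} vanish at $t=s$. For the inductive step I will exploit that $\mathbf{v}_k$ advances by centralized gradient descent on $F(\cdot|\mathcal{D}_\mathcal{N}(t))$ while, from \eqref{eq:w_agg}, the weighted average satisfies $\mathbf{w}_\mathcal{S}(t)=\mathbf{w}_\mathcal{S}(t-1)-\eta\,\tilde g_\mathcal{S}(t)$, where $\tilde g_\mathcal{S}(t) \triangleq \sum_{i\in\mathcal{S}} D_i(t)\,\nabla F(\mathbf{w}_i(t-1)|\mathcal{D}_i(t))/D_\mathcal{S}(t)$.

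\textbf{One-step decomposition.} Subtracting these two recursions and inserting $\pm\eta\,\nabla F(\mathbf{w}_\mathcal{S}(t-1)|\mathcal{D}_\mathcal{N}(t))$, I decompose $\|\mathbf{w}_\mathcal{S}(t)-\mathbf{v}_k(t)\|$ into three pieces: (a) a gradient-descent contraction $\|(\mathbf{w}_\mathcal{S}(t-1)-\mathbf{v}_k(t-1))-\eta[\nabla F(\mathbf{w}_\mathcal{S}(t-1)|\mathcal{D}_\mathcal{N}(t))-\nabla F(\mathbf{v}_k(t-1)|\mathcal{D}_\mathcal{N}(t))]\|$, which is at most $\|\mathbf{w}_\mathcal{S}(t-1)-\mathbf{v}_k(t-1)\|$ by the standard fact that a gradient step on a convex, $\beta$-smooth function with $\eta\leq 1/\beta$ is $1$-Lipschitz; (b) a ``sampling bias'' $\eta\|\zeta(\mathbf{w}_\mathcal{S}(t-1))\|$ that emerges from Definition~\ref{def:e} once the sampled-gradient average is rewritten as the full-dataset gradient plus $\zeta$; and (c) a ``local drift'' $\eta\,\|\tilde g_\mathcal{S}(t)-\sum_{i\in\mathcal{S}}D_i(t)\nabla F(\mathbf{w}_\mathcal{S}(t-1)|\mathcal{D}_i(t))/D_\mathcal{S}(t)\|$ reflecting that local gradients are evaluated at $\mathbf{w}_i$ rather than at $\mathbf{w}_\mathcal{S}$. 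By $\beta$-smoothness, (c) is at most $\eta\beta\sum_{i\in\mathcal{S}}D_i(t)\|\mathbf{w}_i(t-1)-\mathbf{w}_\mathcal{S}(t-1)\|/D_\mathcal{S}(t)$.

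\textbf{Auxiliary divergence bound and assembly.} To control term (c), I will prove a secondary induction showing that $a_i(t)\triangleq\|\mathbf{w}_i(t)-\mathbf{w}_\mathcal{S}(t)\|$ obeys a first-order recursion $a_i(t)\leq (1+\eta\beta)\,a_i(t-1)+\eta\,\delta_i(t)$. The key observation is that when the local-$i$ update is subtracted from the $\mathbf{w}_\mathcal{S}$ update, the common bias $\zeta(\mathbf{w}_\mathcal{S})$ appearing in every per-device gradient deviation of Definition~\ref{def:deltai} cancels once the weights $D_j/D_\mathcal{S}$ (which sum to one) are averaged in, leaving only $\delta_i$. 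Iterating from $a_i(s)=0$ and using $(1+\eta\beta)^{y-1-s}-1\leq 2^{y-1-s}-1$ (since $\eta\beta\leq 1$) produces the factor $(2^{y-1-s}-1)\,\delta_i(y)/\beta$; weight-averaging by $D_i/D_\mathcal{S}$ when plugged into (c) collapses the $\delta_i$'s to $\delta_\mathcal{S}(y)$, giving $\Upsilon(y,k)/\beta$ per step. Summing the one-step inequality from $y=s+1$ to $t$ telescopes (a) and accumulates (b) and (c) into the claimed expression, with the $1/\beta$ prefactor arising from $\eta\leq 1/\beta$.

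\textbf{Main obstacle.} The principal difficulty will be the auxiliary local-divergence lemma: Definition~\ref{def:deltai} only controls gradient deviations net of $\zeta(\mathbf{w}_\mathcal{S})$, so I must choose the triangle-inequality split carefully so that the cross-terms $\beta\|\mathbf{w}_i-\mathbf{w}_\mathcal{S}\|$ produced by $\beta$-smoothness are absorbed into the $(1+\eta\beta)$ coefficient of the recursion rather than inflating the geometric growth rate, and so that only $\delta_i$, and not an extra $\|\zeta\|$ contribution, survives inside the $\Upsilon$ piece of the final bound. The rest is standard telescoping and $\beta$-smoothness bookkeeping.
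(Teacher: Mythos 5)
Your skeleton is essentially the paper's: a one-step recursion from the two update rules, a $\pm\eta\nabla F(\mathbf{w}_{\mathcal{S}}(t-1)|\mathcal{D}_{\mathcal{N}}(t))$ insertion that isolates the sampling bias $\eta\Vert\zeta\Vert$ and a local-divergence term, a separate divergence lemma supplying the factor $\tfrac{\delta_i}{\beta}\big((1+\eta\beta)^{x}-1\big)\le\tfrac{\delta_i}{\beta}(2^{x}-1)$, and telescoping from the synchronization point $t=(k-1)\tau$ where both iterates coincide. The paper imports that divergence lemma from \cite{wang2019adaptive} (its Lemma~3) rather than proving it, and handles your term (a) by triangle inequality plus $\beta$-smoothness instead of the non-expansiveness of the gradient map, but these are cosmetic differences.

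The one step in your plan that does not go through as written is the auxiliary recursion $a_i(t)\le(1+\eta\beta)\,a_i(t-1)+\eta\,\delta_i(t)$ for $a_i(t)=\Vert\mathbf{w}_i(t)-\mathbf{w}_{\mathcal{S}}(t)\Vert$. When you subtract the $\mathbf{w}_{\mathcal{S}}$ update from the local-$i$ update, the $\mathbf{w}_{\mathcal{S}}$ side is an average of gradients evaluated at the \emph{various} points $\mathbf{w}_j(t-1)$, not at $\mathbf{w}_{\mathcal{S}}(t-1)$. Re-centering every $\nabla F(\mathbf{w}_j(t-1)|\mathcal{D}_j)$ at $\mathbf{w}_{\mathcal{S}}(t-1)$ via $\beta$-smoothness therefore produces, in addition to $\eta\beta\,a_i(t-1)$ and the $\eta\delta_i$ term (where your $\zeta$-cancellation is correct), an extra coupling term $\eta\beta\sum_{j\in\mathcal{S}}\tfrac{D_j}{D_{\mathcal{S}}}a_j(t-1)$. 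The per-device recursion does not close with coefficient $(1+\eta\beta)$; if you instead track the weighted average $\bar a(t)$ you get coefficient $(1+2\eta\beta)$ and hence a base of $3$ rather than $2$ in the geometric factor. The standard repair---and what the cited Lemma~3 of \cite{wang2019adaptive} actually does---is to measure local divergence against the centralized iterate, $\Vert\mathbf{w}_i(t)-\mathbf{v}_k(t)\Vert$, whose update uses the single gradient $\nabla F(\mathbf{v}_k(t-1)|\mathcal{D}_{\mathcal{N}})$; there the recursion closes per device with coefficient $(1+\eta\beta)$ and initial value $0$ at $t=(k-1)\tau$, yielding exactly the $(2^{y-1-(k-1)\tau}-1)$ factor in $\Upsilon(y,k)$. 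With that substitution (or by explicitly carrying the averaged coupling term and accepting the weaker constant), the rest of your assembly and telescoping matches the paper's argument.
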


\begin{proof}
See Appendix~\ref{app:main1}.
\end{proof}
Through $\delta_{\mathcal{S}}(t)$, Theorem~\ref{thm:error} establishes a relationship between the difference in model parameters and the datapoints $D_i(t)$ in the sampled set $i \in \mathcal{S}$. Using this, we obtain an upper bound on the difference between our $\mathbf{w}_{\mathcal{S}}(t)$ and the global minimizer of model loss $\mathbf{w}^*(t) =\argmin_{\mathbf{w}}   F(\mathbf{w}|\mathcal{D}_{\mathcal{N}}(t))$:

\begin{corollary}[Upper bound on the difference between sampled FedL and the optimal] \label{c1}
The difference of the loss induced by $\mathbf{w}_{\mathcal{S}}(t)$  compared to the loss induced by $\mathbf{w}^*(t)$ for $t \in \{(k-1)\tau,\cdots,k\tau-1\}$, is given by:
\begin{equation} \label{eq:cl1_result}
\begin{aligned}
& F(\mathbf{w}_{\mathcal{S}}(t)|\mathcal{D}_{\mathcal{N}}(t)) - F(\mathbf{w}^*(t)|\mathcal{D}_{\mathcal{N}}(t))\leq  \\
&g(\hat{\Upsilon}(\hat{K})) \triangleq \left(t \xi \eta \left(1 - \frac{\beta \eta}{2}\right) - \frac{(\hat{K}+1)L}{\beta \epsilon^2} \hat{\Upsilon}(\hat{K}) \right)^{-1} ,
\end{aligned}
\end{equation}
\normalsize
where $\hat{\Upsilon}(\hat{K})\hspace{-0.5mm} \triangleq\hspace{-0.5mm} \sum_{y=(\hat{K}-1)\tau+1}^{\hat{K}\tau}  \left(\Upsilon(\hat{K},y) \hspace{-0.5mm}+\hspace{-0.5mm} \Vert \zeta(\mathbf{w}_{\mathcal{S}}(y\hspace{-0.5mm}-\hspace{-0.5mm}1))\Vert \right)$, $\hat{K} = \floor{t /\tau}$, and $\xi = \min_k \frac{1}{{\Vert \mathbf{v}_k((k-1)\tau) - \mathbf{w}^*(t) \Vert}^2}$.
\end{corollary}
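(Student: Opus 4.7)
\textbf{Proof plan for Corollary~\ref{c1}.} The plan is to follow the standard two-stage template used in convergence analyses of FedL with periodic aggregation (e.g., Wang et al.~2019): first control the trajectory of the hypothetical centralized model $\mathbf{v}_k(t)$ on its own via the classical convex gradient-descent recursion, and then graft on the per-round deviation between $\mathbf{w}_\mathcal{S}(t)$ and $\mathbf{v}_k(t)$ supplied by Theorem~\ref{thm:error}. Concretely, I would fix an aggregation index $\hat{K}=\lfloor t/\tau\rfloor$, split
\begin{equation*}
F(\mathbf{w}_\mathcal{S}(t)|\mathcal{D}_\mathcal{N}(t)) - F(\mathbf{w}^*(t)|\mathcal{D}_\mathcal{N}(t)) = \underbrace{\bigl[F(\mathbf{w}_\mathcal{S}(t))-F(\mathbf{v}_{\hat K}(t))\bigr]}_{\text{(A): sampling/offloading gap}} + \underbrace{\bigl[F(\mathbf{v}_{\hat K}(t))-F(\mathbf{w}^*(t))\bigr]}_{\text{(B): centralized optimization gap}},
\end{equation*}
and bound each piece separately.

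For (A), I would invoke the $L$-Lipschitz property of $F$ together with Theorem~\ref{thm:error} to get $F(\mathbf{w}_\mathcal{S}(t))-F(\mathbf{v}_{\hat K}(t))\le L\,\|\mathbf{w}_\mathcal{S}(t)-\mathbf{v}_{\hat K}(t)\|\le (L/\beta)\,\hat{\Upsilon}(\hat K)$, which is exactly the quantity appearing inside $g(\cdot)$ up to the factor $(\hat K+1)/\epsilon^2$ that I explain below. For (B), I would use convexity and $\beta$-smoothness of $F$ in the standard way: letting $\theta_k(t)\triangleq F(\mathbf{v}_k(t))-F(\mathbf{w}^*(t))$, the gradient descent update with step size $\eta\le 1/\beta$ gives the one-step descent inequality $\theta_k(t+1)\le \theta_k(t)-\eta(1-\beta\eta/2)\|\nabla F(\mathbf{v}_k(t))\|^2$, and convexity then yields $\|\nabla F(\mathbf{v}_k(t))\|^2\ge \theta_k(t)^2/\|\mathbf{v}_k(t)-\mathbf{w}^*(t)\|^2\ge \xi\,\theta_k(t)^2$ with $\xi=\min_k 1/\|\mathbf{v}_k((k-1)\tau)-\mathbf{w}^*\|^2$. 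Dividing through and telescoping produces the familiar reciprocal recursion $1/\theta_k(t+1)-1/\theta_k(t)\ge \xi\eta(1-\beta\eta/2)$, which summed over the local steps of all $\hat K$ completed periods (plus the current partial one) gives $1/\theta_{\hat K}(t)\ge t\,\xi\eta(1-\beta\eta/2)$.

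To fuse the two contributions I would handle the re-synchronization at every aggregation by repeating the above reciprocal argument per period and chaining across the $\hat K+1$ blocks, absorbing the $\mathbf{v}_k\!\to\!\mathbf{w}_\mathcal{S}$ resets using (A); each reset contributes an additive penalty of order $L\hat{\Upsilon}/\beta$, and the standing assumption $F(\mathbf{w}_\mathcal{S}(k\tau))-F(\mathbf{w}^*)\ge \epsilon$ (implicit in the $\epsilon^{-2}$ factor) is what keeps the reciprocal bounds finite across resets, producing the $(\hat K+1)L\hat{\Upsilon}(\hat K)/(\beta\epsilon^2)$ term. Inverting the resulting lower bound on $1/[F(\mathbf{w}_\mathcal{S}(t))-F(\mathbf{w}^*(t))]$ then yields exactly the closed-form $g(\hat{\Upsilon}(\hat K))$ claimed.

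The main obstacle I anticipate is the bookkeeping across aggregation boundaries rather than any single inequality: Theorem~\ref{thm:error} only bounds $\|\mathbf{w}_\mathcal{S}(t)-\mathbf{v}_k(t)\|$ within one period $[(k-1)\tau+1,k\tau]$, so the reciprocal telescoping has to be applied period-by-period and then re-concatenated, with the sampling/offloading error injected at each re-synchronization. Keeping track of the factor $\hat K+1$ and verifying that $g(\hat{\Upsilon}(\hat K))$ is well-defined (i.e., that the denominator stays positive, which is where the $\epsilon$-lower-bound assumption is used) is the delicate step; everything else is either convexity, $L$-Lipschitzness, or a direct appeal to Theorem~\ref{thm:error}.
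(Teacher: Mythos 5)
Your proposal matches the paper's own argument: the paper likewise defines $\theta_k(t)=F(\mathbf{v}_k(t))-F(\mathbf{w}^*(t))\ge\epsilon$, telescopes the reciprocal gradient-descent recursion $1/\theta$ across the $\hat{K}$ aggregation periods to obtain the $t\xi\eta(1-\beta\eta/2)$ term, and uses $L$-Lipschitzness together with Theorem~\ref{thm:error} (via the lemmas of Wang et al.) to inject an $\frac{L}{\beta\epsilon^2}\hat{\Upsilon}(\hat{K})$ penalty at each of the $\hat{K}+1$ reset/comparison points before inverting. The only cosmetic difference is that you state the $F(\mathbf{w}_{\mathcal{S}})-F(\mathbf{v}_{\hat{K}})$ versus $F(\mathbf{v}_{\hat{K}})-F(\mathbf{w}^*)$ split additively before moving to reciprocal space, whereas the paper works with reciprocal differences throughout; the substance is the same.
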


\begin{proof}
See Appendix~\ref{app:c1}.
\end{proof}
As our ultimate goal is an expression of~\eqref{eq:obj} in terms of the data $\mathcal{D}_i(t)$ at each node, we consider the relationship between $g(\hat{\Upsilon}({\hat{K}}))$ and $\mathcal{D}_i(t)$, which is clearly non-convex through $\hat{\Upsilon}({\hat{K}})$. Since $\hat{\Upsilon}(\hat{K}) \ll 1$ (see Appendix~\ref{app:c1}),~\eqref{eq:cl1_result} can be approximated using the first two terms of its Taylor series:
\begin{equation} \label{eq:fupsilon}
\begin{aligned}
& g(\hat{\Upsilon}) \approx \frac{1}{t \xi \eta (1 -\frac{\eta \beta}{2})} + \frac{{(\hat{K}+1)L}}{\beta \epsilon^2\left(t \xi \eta (1 -\frac{\eta \beta}{2})\right)^2} \hat{\Upsilon}.
\end{aligned}
\end{equation} 
At each time instant, the first term in the right hand side (RHS) of \eqref{eq:fupsilon} is a constant. Thus, under this approximation, the RHS of \eqref{eq:cl1_result} becomes proportional to $\hat{\Upsilon}$, which is in turn a function of $\delta_{\mathcal{S}}(t)$. The final step is to bound the expression for $\delta_i(t)$, and thus their weighted sum $\delta_{\mathcal{S}}(t)$, in terms of the $D_i(t)$, $\forall i \in \mathcal{S}$.



\begin{proposition}[Upper bound on the difference between local gradients] \label{prop:1} The difference in gradient with respect to a sampled device dataset vs. the full dataset satisfies:
\begin{equation}\label{eq:lemma}
\hspace{-0mm}
\begin{aligned} 
&\Vert\nabla F(\mathbf{w}_{\mathcal{S}}(t)|\mathcal{D}_i(t)) - \nabla F\left(\mathbf{w}_{\mathcal{S}}(t)|\mathcal{D}_{\mathcal{N}}(t) \right) - \zeta(\mathbf{w}_{\mathcal{S}}(t))\Vert \\
&  \leq \left(\frac{D_{\mathcal{N}}(t)-D_{\mathcal{S}}(t)}{D_{\mathcal{N}}(t)}\right) \overline{\nabla F(t)} + \frac{\gamma}{\sqrt{D_i(t)}} + C \equiv \delta_i(t), 
\end{aligned}
\hspace{-5mm}
\end{equation}
where $C \triangleq \left({D_{\mathcal{N}}(t)}\right)^{-1}\sum_{i \in \hat{\mathcal{S}}} D_i(t)\nabla F(\mathbf{w}_{\mathcal{S}}(t)|\mathcal{D}_i(t))$, $\hat{\mathcal{S}} = \mathcal{N} \setminus \mathcal{S}$, $\gamma$ is a constant independent of $\mathcal{D}_i(t)$, and 
\begin{equation} \label{eq:nablaFbar} 
\overline{\nabla F(t)} \triangleq \left({D_{\mathcal{S}}(t)}\right)^{-1}{\sum_{i \in \mathcal{S}}D_i(t)\nabla F(\mathbf{w}_{\mathcal{S}}(t)|\mathcal{D}_i(t))}.
\end{equation}
\end{proposition}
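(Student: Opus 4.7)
The plan is to reduce the left-hand side of \eqref{eq:lemma} to three pieces that match the right-hand side termwise, and then address the only nontrivial piece via a concentration bound. First, I would split the weighted sum defining $\nabla F(\mathbf{w}_\mathcal{S}(t)|\mathcal{D}_\mathcal{N}(t))$ over $\mathcal{S}$ and $\hat{\mathcal{S}}$ to obtain the identity
\begin{equation*}
\nabla F(\mathbf{w}_\mathcal{S}(t)|\mathcal{D}_\mathcal{N}(t)) = \frac{D_\mathcal{S}(t)}{D_\mathcal{N}(t)} \overline{\nabla F(t)} + C.
\end{equation*}
Plugging this into Definition~\ref{def:e} gives the clean identity $\zeta(\mathbf{w}_\mathcal{S}(t)) = \frac{D_\mathcal{N}(t)-D_\mathcal{S}(t)}{D_\mathcal{N}(t)} \overline{\nabla F(t)} - C$. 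Substituting back into \eqref{eq:lemma} and applying the triangle inequality bounds the left-hand side by
\begin{equation*}
\Vert\nabla F(\mathbf{w}_\mathcal{S}(t)|\mathcal{D}_i(t)) - \nabla F(\mathbf{w}_\mathcal{S}(t)|\mathcal{D}_\mathcal{N}(t))\Vert + \frac{D_\mathcal{N}(t)-D_\mathcal{S}(t)}{D_\mathcal{N}(t)}\Vert\overline{\nabla F(t)}\Vert + \Vert C\Vert,
\end{equation*}
which already matches two of the three RHS terms of \eqref{eq:lemma} (with $\overline{\nabla F(t)}$ and $C$ understood via their norms, as they must be to be comparable with the scalar $\gamma/\sqrt{D_i(t)}$).

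The main obstacle is bounding the first summand by $\gamma/\sqrt{D_i(t)}$. My plan is to treat $\mathcal{D}_i(t)$ as an i.i.d.\ sample of size $D_i(t)$ drawn from the common data-generating distribution whose dataset-weighted empirical mean across the network is captured by $\nabla F(\mathbf{w}_\mathcal{S}(t)|\mathcal{D}_\mathcal{N}(t))$. Under Assumption~1, each per-sample gradient $\nabla f(\mathbf{w}_\mathcal{S}(t),\mathbf{x}_d,y_d)$ is uniformly bounded in norm by $L$ (a consequence of $L$-Lipschitzness of $F$), so a vector concentration inequality (e.g., Hoeffding for bounded random vectors, or a Bernstein-type bound) applied to $\{\nabla f(\mathbf{w}_\mathcal{S}(t),\mathbf{x}_d,y_d)\}_{d \in \mathcal{D}_i(t)}$ yields the $1/\sqrt{D_i(t)}$ rate with a constant $\gamma$ that depends on $L$ and the confidence level but is independent of $D_i(t)$. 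Summing the three bounds recovers \eqref{eq:lemma} and identifies the right-hand side as a valid $\delta_i(t)$ in the sense of Definition~\ref{def:deltai}.
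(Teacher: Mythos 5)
Your proposal is correct and follows essentially the same route as the paper's proof: both rest on the identity $\zeta(\mathbf{w}_{\mathcal{S}}(t)) = \frac{D_{\mathcal{N}}(t)-D_{\mathcal{S}}(t)}{D_{\mathcal{N}}(t)}\overline{\nabla F(t)} - C$ obtained by splitting the network-wide gradient over $\mathcal{S}$ and $\hat{\mathcal{S}}$, followed by the triangle inequality and a $\gamma/\sqrt{D_i(t)}$ concentration bound on $\Vert\nabla F(\mathbf{w}_{\mathcal{S}}(t)|\mathcal{D}_i(t)) - \nabla F(\mathbf{w}_{\mathcal{S}}(t)|\mathcal{D}_{\mathcal{N}}(t))\Vert$ (the paper invokes the central limit theorem loosely where you propose a Hoeffding/Bernstein-type vector bound, which is if anything slightly more careful). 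Your remark that $\overline{\nabla F(t)}$ and $C$ must be read as norms on the right-hand side is also the correct reading of the paper's notational shorthand.
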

\begin{proof}
See Appendix~\ref{app:prop1}.
\end{proof}
\normalsize
The above proposition relates each $\delta_i(t)$ to the number of instantaneous data points available at device $i$. 

\subsection{Offloading as a Sequential Convex Optimization} \label{ss:p1c}

Using the result of~\eqref{eq:fupsilon} to replace the RHS of~\eqref{eq:cl1_result} implies that the objective function in~\eqref{eq:obj} is proportional to $\frac{1}{T}\sum_{t=1}^{T} \delta_{\mathcal{S}}(t)$, where $\delta_{\mathcal{S}}(t)$ is defined in~\eqref{th1:3} as a sum-of-ratios of $\delta_i(t)$. Considering $\frac{1}{T}\sum_{t=1}^{T} \delta_{\mathcal{S}}(t)$ as the objective in problem $\boldsymbol{\mathcal{P}}$ yields the sum-of-ratios problem in fractional programming~\cite{schaible2003fractional}. 
The scale of existing solvers for the sum-of-ratios fractional programming problem (e.g. \cite{kuno2002branch}) 
are on the order of ten ratios, which corresponds to ten devices in our case. Contemporary large-scale networks that may have hundreds of edge devices~\cite{8373692} therefore cannot be solved accurately or in a time-sensitive manner. Motivated by the above fact, we approximate $\delta_{\mathcal{S}}(t) \approx \frac{1}{S} \sum_{i \in \mathcal{S}} \delta_i(t)$. Using this with~\eqref{eq:lemma}, we obtain the following approximation for~\eqref{eq:obj}:
\vspace{-4mm}

\small
\begin{align} \label{eq:obj_temp}
& \frac{1}{T}\sum_{t=1}^{T} \underbrace{\left(\frac{D_{\mathcal{N}}(t)-D_{\mathcal{S}}(t)}{D_{\mathcal{N}}(t)}\right)  \overline{\nabla F(t)}}_{(a)} + \frac{1}{\vert\mathcal{S}\vert} \sum_{i \in \mathcal{S}}\underbrace{ \frac{\gamma}{\sqrt{D_i(t)}}}_{(b)},
\end{align}
\normalsize
where term $(a)$ is due to sampling and term $(b)$ is the statistical error from the central limit theorem. Thus, for a known binary vector $\mathbf{x}$ (i.e., a known $\mathcal{S}$)  that satisfies (16), we arrive at the following optimization problem for the D2D data offloading:
\begin{align}
 &(\boldsymbol{\mathcal{P}}_D):~~ 
 \underset{\{\boldsymbol{\Phi}(t)\}_{t=1}^{T}}{\min} \eqref{eq:obj_temp}
&  \textrm{s.t.}~ (6)-(15), (17).\nonumber
\end{align} 
Since the number of datapoints at the unsampled devices is fixed for all time, $D_{\mathcal{N}}(t)$ can be expressed as $D_{\mathcal{S}}(t) + D_{\hat{\mathcal{S}}}$, where $D_{\hat{\mathcal{S}}} = \sum_{i \in \hat{\mathcal{S}}} D_i$ is a constant. 
Consequently, both the coefficient of $\overline{\nabla F(t)}$ in term $(a)$ and the entirety of term $(b)$ in \eqref{eq:obj_temp} are decreasing functions of the quantity of data $D_i(t)$ at sampled devices $i \in \mathcal{S}$. Furthermore, given $\overline{\nabla F(t)}$, through~\eqref{eq:con1} and~\eqref{eq:con2}, both terms $(a)$ and $(b)$ are convex with respect to the offloading variables in Problem $\boldsymbol{\mathcal{P}}_D$. 
The only remaining challenge is then to obtain $\overline{\nabla F(t)}$, which we consider next. 




\begin{figure}[t]
\includegraphics[width=.47\textwidth]{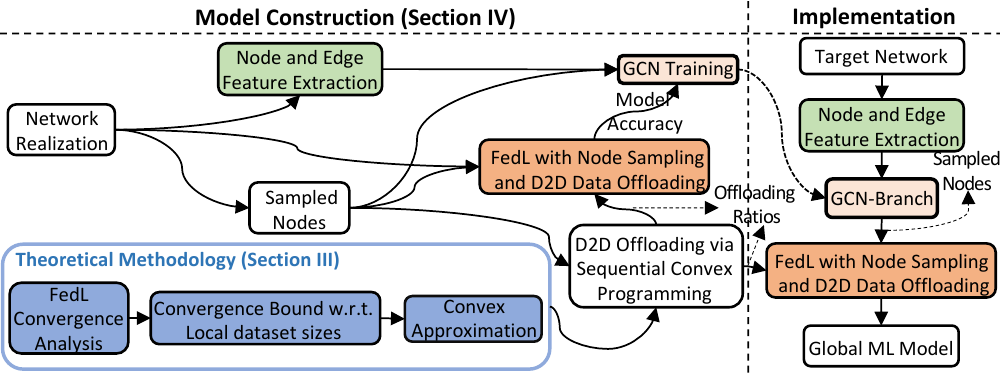}
\centering
\caption{Overview of the joint sampling and offloading methodology developed in Sec.~\ref{s:p1}\&\ref{s:p2}. During model construction, the offloading optimizer from Sec.~\ref{s:p1} is used to determine offloading for a set of sampled devices. The GCN-based algorithm developed in Sec.~\ref{s:p2} determines the combination of sampling and optimized offloading expected to maximize FedL accuracy. Then, the resulting model is applied to the target network for FedL implementation.}
\label{fig:BigPicture}
\vspace{-5mm}
\end{figure}

\textbf{Sequential gradient approximation:}
Obtaining $\overline{\nabla F(t)}$ requires the knowledge of real-time gradients, $\nabla F(\mathbf{w}_{\mathcal{S}}(t)|\mathcal{D}_i(t))$, $\forall i \in \mathcal{S}$, which are unknown a priori. Furthermore, the gradients of the devices are only observed at the global aggregation time instances $t = k\tau$. Motivated by this, we approximate $\overline{\nabla{{F}(t)}}$ for $t \in \{k\tau+1,\cdots,(k+1)\tau\}$, $ k \in \{1,\cdots,K\}$, using the gradients observed at the most recent global aggregation, i.e., $\nabla F(\mathbf{w}_{\mathcal{S}}(k\tau)|\mathcal{D}_i(k\tau))$, $i\in \mathcal{S}$ on which we perform a sequence of corrective approximations.
Specifically, since the average loss $F$ is convex, $\overline{\nabla F(t)}$ is expected to decrease over time. We assume that this decrease occurs linearly and approximate the real-time gradient using the previously observed gradient at the server as $\overline{\nabla F(t)} \approx \overline{\nabla F(k\tau)} / {\alpha^{t-k\tau}_{k+1}}$, $t\in \{k\tau+1,\cdots,(k+1)\tau\}$, $\forall k \in \{1,\cdots,K\}$, where the scaling factor $\alpha_{k+1} \hspace{-.5mm}> \hspace{-.5mm} 1$ is re-adjusted after every global aggregation $k$. Through the re-adjustment procedure, the server receives the gradients and computes the scaling factor for the each aggregation period as $\alpha_{k+1} = \sqrt[\tau]{\overline{\nabla F((k-1)\tau)}/ \overline{\nabla F(k\tau)}}$. 

Given the aforementioned characteristics of terms $(a)$ and $(b)$ in~\eqref{eq:obj_temp}, our proposed iterative approximation of $\overline {\nabla F(t)}$, and the fact that the constraints of $\boldsymbol{\mathcal{P}}_D$ are all affine at each time instance, we can solve this problem as a sequence of convex optimization problems over time. For this, we employ the CVXPY convex optimization software \cite{diamond2016cvxpy}. 



\section{Smart Device Sampling with D2D Offloading} \label{s:p2}
\begin{figure} 
    \centering
    \includegraphics[width=0.48\textwidth]{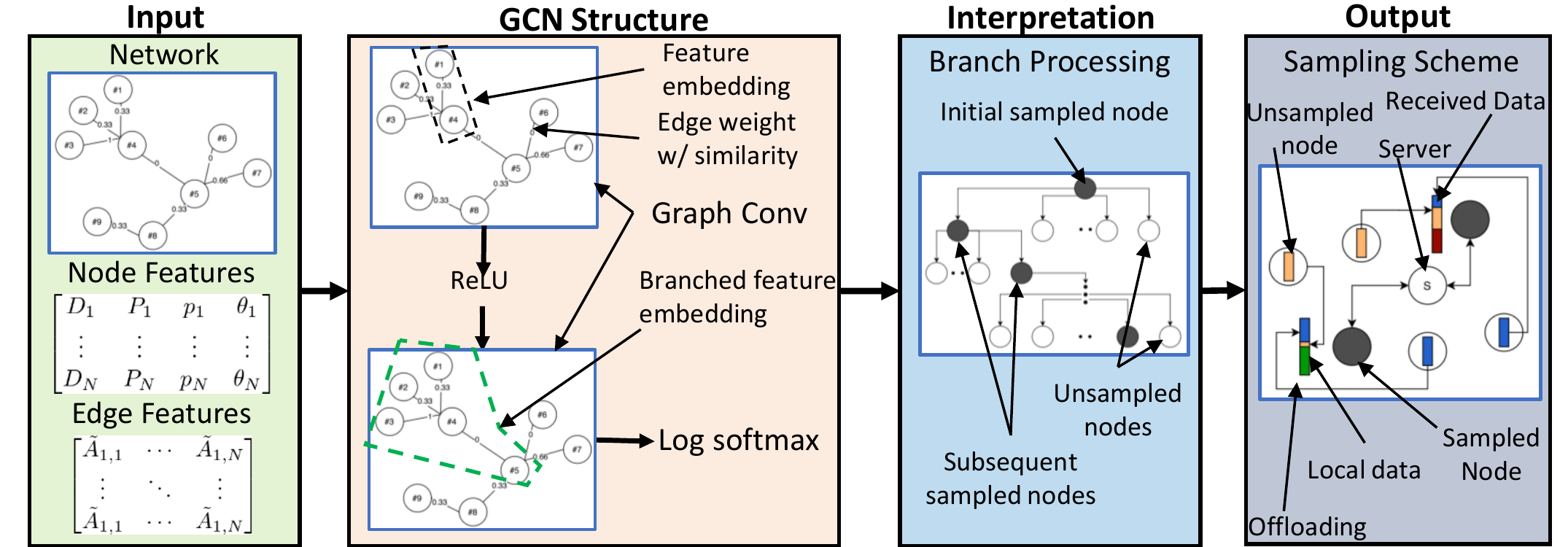}
    \caption{Architecture of our GCN-branch sampling algorithm. 
    Each GCN layer convolves node features in local neighborhoods from the input network. 
    The output probability vector informs the choice of sampled nodes $\mathcal{S}$. 
    This result is then passed to the offloading optimizer to determine offloading between sampled and unsampled nodes.}
    \label{fig:gcn_architecture}
    \vspace{-6mm}
\end{figure}

\noindent 
We now turn to the sampling decisions in problem $\boldsymbol{\mathcal{P}}$, which must be coupled with the offloading solution to $\boldsymbol{\mathcal{P}}_{D}$. After explaining the rationale for our GCN-based approach (Sec.~\ref{ss:sp2_intro}), we will detail our training procedure encoding the network characteristics (Sec.~\ref{ss:gcnmodel}). Finally, we will develop an iterative procedure for selecting the sample set (Sec.~\ref{ss:fullimplement}).

\subsection{Rationale and Overview of GCN Sampling Approach} \label{ss:sp2_intro}
Sampling the optimal subset of nodes from a resource-constrained network to maximize a utility function (in our case, minimizing the ML loss) has some similarity to  0-1 knapsack problem~\cite{martello2000new}. 
In this combinatorial optimization problem, a set of weights and values for $n$ items are given, where each item can be either added or left out to maximize the value of the items within the knapsack subject to a weight capacity. Analogously, our sampling problem aims to maximize FedL accuracy 
while adhering to a sampling budget $S = \sum_{i \in \mathcal{N}}x_i$.
Strategies for the knapsack problem become unsuitable here because the value that each device provides to FedL is difficult to quantify: it depends on the ML loss function, the gradient descent procedure, and the D2D relationships from Sec.~\ref{s:p1}.

To address these complexities, we propose a (separate) ML technique to model the relationship between network characteristics, the sampling set, and the resulting FedL model quality. Specifically, we develop a sampling technique based on active filtering of a Graph Convolutional Network (GCN)~\cite{kipf2016semi,lee2020fast}. In a GCN, the learning procedure consists of sequentially feeding an input (the network graph) through a series of graph convolution~\cite{wu2020comprehensive} layers, which generalize the traditional convolution operation into non-Euclidean spaces and thus captures connections among nodes in different neighborhoods.

Our methodology is depicted in Fig.~\ref{fig:gcn_architecture}. GCNs excel at graph-based classification tasks, as they learn over the intrinsic graph structure. However, GCNs by themselves have performance issues when there are multiple good candidates for the classification problem~\cite{li2018combinatorial}. This holds for our large-scale network scenario, as many high performing sets of sampling candidates can be expected. The data offloading scheme adds another important dimension: a sampled node $i$ may perform poorly when considered in isolation, but it may have high processing capacities $P_i(t)$ and be connected to unsampled nodes $j \in \hat{\mathcal{S}}$ with large quantities of local data and high transfer limits $\Psi_{j,i}(t)$. We address these issues by (i) incorporating the solution from Sec.~\ref{s:p1} into the GCN training procedure, and (ii) proposing \textit{sampling GCN-branch}, a network-based post-processing technique that maps the GCN output to a sampling set by considering the underlying connectivity-similarity matrix.

\begin{figure}[t] 
\centering
\includegraphics[width=60mm,height=25mm]{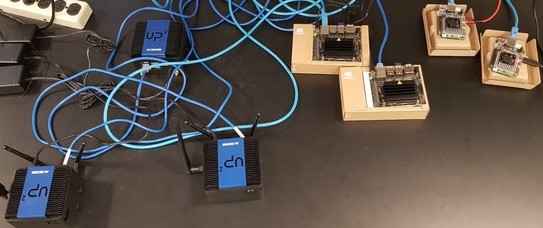}
\caption{IoT testbed used to generate device and link characteristics.}
\vspace{-6mm}
\label{fig:testbed}
\end{figure} 

\subsection{GCN Architecture and Training Procedure} \label{ss:gcnmodel}

We consider a GCN function $H(\boldsymbol{\pi},\tilde{\mathbf{A}})$ with two inputs: (i)~$\boldsymbol{\pi} \in \mathbb{R}^{N \times U}$, a matrix of $U$ node features, and (ii) $\mathbf{\tilde{A}} \in \mathbb{R}^{N \times N}$, the augmented connectivity-similarity matrix. The feature vector for each node $i$ is defined as  $\boldsymbol{\pi}_i \triangleq [D_i(0), P_i(0), p_i(0), \theta_i(0)]$, forming the rows of $\boldsymbol{\pi}$, and the augmented connectivity-similarity matrix is defined as $\tilde{\mathbf{A}}\triangleq \boldsymbol{{\Lambda}(0)}+\mathbf{I}_{N}$, where $\mathbf{I}_{N}$ denotes the identity matrix~\cite{wu2020comprehensive}. 
$H$ consists of two graph convolutional layers separated by a rectified linear unit (ReLU) activation layer~\cite{kipf2016semi}, as depicted in Fig.~\ref{fig:gcn_architecture}. The outputs of each layer are defined as:\vspace{-1mm}
\begin{equation}
    \mathbf{H}^{(l)} \triangleq \sigma\left( \tilde{\mathbf{D}}^{-\frac{1}{2}} \tilde{\mathbf{A}} \tilde{\mathbf{D}}^{-\frac{1}{2}} \mathbf{H}^{(l-1)} \mathbf{Q}^{(l)}\right),~l \in \{1,2\},
    \vspace{-1mm}
\end{equation}
where $\tilde{\mathbf{D}}$ is the degree matrix of $\tilde{\mathbf{A}}$, $\mathbf{Q}^{(l)}$ denotes the trainable weights for the $l$-th layer, and $\sigma$ represents ReLU activation. Note that $\mathbf{H}^{(0)} = \boldsymbol{\pi}$, $\mathbf{Q}^{(1)} \in \mathbb{R}^{U \times O}$, and $\mathbf{Q}^{(2)}\in \mathbb{R}^{O \times 1}$, where $O$ is the dimension of the second layer. Finally, log-softmax activation is applied to $\mathbf{H}^{(2)} \in \mathbb{R}^{N}$ to convert the results into a vector of probabilities, i.e., $\boldsymbol{\Gamma} \in [0,1]^{N}$, representing the likelihood of each node belonging to the sampled set.

\textbf{GCN training procedure:} To train the GCN weights, we generate a set of sample network and node data realizations $e = 1,\cdots,E$ with the properties from Sec.~\ref{sss:devices}\&\ref{sss:graph}. For each realization, we calculate the matrices $\boldsymbol{\pi}_e$ and $\mathbf{\tilde{A}}_e$ corresponding to the inputs of the GCN. Then, for each candidate sampling allocation $\mathbf{x}^s_e=[({x}^s_e)_i]_{1\leq i\leq N}$ (with $\sum_i ({x}^s_e)_i = S$), we solve $\boldsymbol{\mathcal{P}}_{D}$ from Sec.~\ref{s:p1} to obtain the offloading scheme, and then determine the loss of FedL resulting from model training and D2D offloading. Among these, we choose the $\mathbf{x}^{\star}_e$ that yields the smallest objective to be the target GCN output. The collection of $[(\boldsymbol{\pi}_e, \boldsymbol{\tilde{A}}_e, \mathbf{x}^{\star}_e)]_{e=1}^{E}$ form the training samples for the GCN.


As the number of devices $N$ increases, the number of choices that will be considered for the sampled set increases combinatorially as $\binom{N}{S}$. An advantage of this GCN procedure is that it is network-size independent: once trained on a set of realizations for tolerable-sized values of $N$, the graph convolutional layer weights $\mathbf{Q}^{(l)}$, $l \in \{1,2\}$, can be applied to the desired network of arbitrary size $N$. Our obtained performance results in Sec.~\ref{s:numRes} verify this experimentally.

\begin{figure}[t]
\centering
\includegraphics[width = 0.45\textwidth]{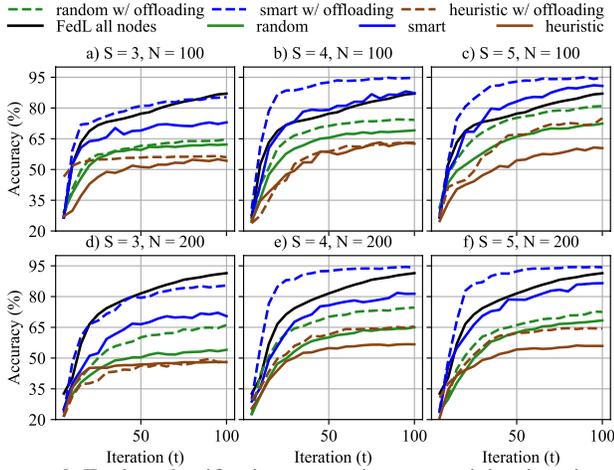}
\vspace{-2.5mm}
\caption{Testing classification accuracies over training iterations on MNIST obtained by the sampling schemes with and without offloading, and by FedL using all nodes, for different sampled sizes ($S$) and nodes ($N$). For $S > 3$, our smart sampling with offloading method consistently obtains a wide margin of improvement over all schemes.}
\label{fig:345_mnist}
\vspace{-5mm}
\end{figure}

\subsection{Offloading-Aware Smart Device Sampling} \label{ss:fullimplement} 

Given any network graph, our procedure must solve the sampling problem at the point of FedL initialization, i.e., $t=0$.
With the trained GCN in hand, we obtain $\boldsymbol{\pi}$ and $\tilde{\mathbf{A}}$ for the target network and calculate $\boldsymbol{\Gamma} = H(\boldsymbol{\pi},\tilde{\mathbf{A}})$, $\boldsymbol{\Gamma}=[\Gamma_i]_{1\leq i\leq N}$. Given this output, our sampling GCN-branch 
algorithm populates the set $\mathcal{S}$ as follows. Let $\mathcal{N}_p \subset \mathcal{N}$ be the subset of nodes in the 98th percentile of initial data quantity. Starting with $\mathcal{S} = \emptyset$, the first node is added according to $S = S \cup \{s_1\}$, where $s_1 = \argmax_{i \in \mathcal{N}_p} \Gamma_i$. In this way, the first node added is the device with highest GCN probability among the largest data generation nodes. To choose subsequent sampled nodes, the algorithm performs a recursive branch-based search on the initial connectivity-similarity matrix $\boldsymbol{\Lambda}(0)$ for nodes with the highest sampling probabilities and the least aggregate data similarity to the previously sampled nodes. 
Formally, we choose the $n$-th node addition as $\mathcal{S} = \mathcal{S} \cup \{s_n\}$, where $s_{n} =\argmax_{i \in \mathcal{R}_{s_{n-1}}} \Gamma_i$, with $\mathcal{R}_{s_{n-1}}$ denoting the neighbor nodes of $s_{n-1}$ within the 98th percentile of data dissimilarity to $s_{n-1}$ (i.e., based on $\boldsymbol{\Lambda}(0)$).
In this way, our branch algorithm relies on the GCN to decide which branch the sampling scheme will follow given its current sampled nodes (visualization in Fig.~\ref{fig:gcn_architecture}), 
so that subsequent selections are more likely to contain nodes with (i) different 
data distributions while (ii) leading to new neighborhoods that can contribute to the current set. Once the sampled set $\mathcal{S}$ is determined, the offloading is scheduled for $t = 0,...,T$ per the solver for $\boldsymbol{\mathcal{P}}_{D}$ from Sec.~\ref{s:p1}.

\textbf{Summary of methodology:} Fig.~\ref{fig:BigPicture} summarizes our methodology developed in Sec.~\ref{s:p1}\&\ref{s:p2} for solving $\boldsymbol{\mathcal{P}}$. The sequential convex optimization for offloading (Sec.~\ref{s:p1}) is embedded within the GCN-based sampling procedure (Sec.~\ref{s:p2}). Once the model is trained on sample network realizations, it is applied to the target network to generate the $\mathcal{S}$ and $\boldsymbol{\Phi}(t)$ for FedL.


\vspace{-1mm}
\section{Experimental Evaluation}\label{s:numRes}

\begin{figure}[t]
\centering
\includegraphics[width=0.45\textwidth]{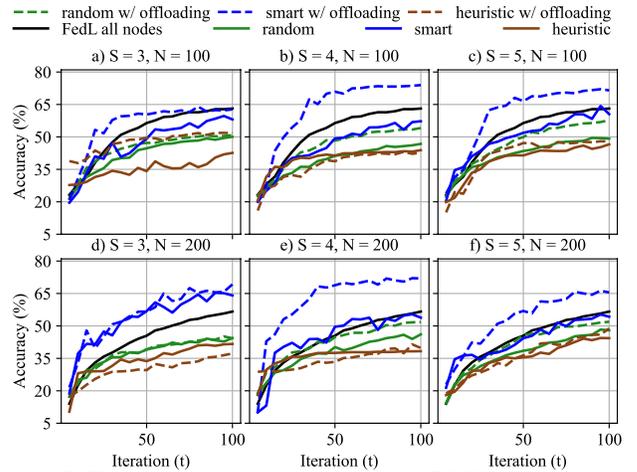}
\vspace{-2.5mm}
\caption{Testing classification accuracies on F-MNIST for the same setup as in Fig.~\ref{fig:345_mnist}. The results are consistent with the MNIST dataset. The wide margin of improvement obtained by smart sampling with offloading vs. without emphasizes the benefit of considering these two aspects jointly for FedL optimization.}
\label{fig:345_fmnist}
\vspace{-5mm}
\end{figure}


\subsection{Setup and Experimental Procedure} \label{ss:setup}
\subsubsection{Network characteristics via wireless testbed}
We employed our IoT testbed in Fig.~\ref{fig:testbed} to obtain device and communication characteristics. It consists of Jetson Nano, Coral Dev Board TPU, and UP Squared AI Edge boards configured to operate in D2D mode.
We used Dstat~\cite{dstat} to collect the device resources and power consumption. We map the measured computing resources (in CPU cycles and RAM) and the corresponding power consumption (in mW) at devices to the costs and capacities in our model by calculating the Gateway Performance Efficiency Factor (GPEF)~\cite{morabito2018legiot}. Specifically, to determine the processing costs $p_i(t)$, we measured the GPEF of the devices running gradient iterations on the MNIST dataset~\cite{yann}. For the processing capacities $P_i(t)$, we pushed the devices to 100\% load and measured the GPEF. 
We initialized the devices at 25\%-75\% loads, and treated the available remaining capacity as the receive buffer parameter $\theta_i(t)$. 

For the transmission costs, we measured GPEF spent on D2D offloading over WiFi. Our WiFi links, when only devoted to D2D offloading, consistently saturated at 12 Mbps. To simulate the effect of external tasks, we limit available bandwidth for D2D to 1, 6, and 9 Mbps. We then calculated the transmission resource budget for devices as transfer limits $\Psi_{i}(t)$, and modelled unit transfer costs $\psi_{i,j}(t)$ as normalized D2D latency.

\subsubsection{Datasets and large-scale network generation}
For FedL training, we use MNIST and Fashion-MNIST (F-MNIST)~\cite{fmnist} image classification datasets. We consider a CNN predictor composed of two convolutional layers with ReLU activation and dropout. The devices perform $\tau = 5$ rounds of gradient descent with a learning rate $\eta = 0.01$. 
Following \cite{tu2020network}, we generate network topologies with $N=100$ to $800$ devices using Erd{\"o}s–Rényi graphs with link formation probability (i.e., $A_{i,j} = 1$) of $0.1$. To produce local datasets across the nodes that are both overlapping and non-i.i.d, the datapoints at each node are chosen uniformly at random with replacement from datapoints among three labels (i.e., image classes). 
Differentiating the labels between devices captures dataset heterogeneity (i.e., from different devices collecting different labels). 
The number of initial datapoints $D_i(0)$ at each device follows a normal distribution with mean $\mu=(D_{\mathcal{N}}(0))N^{-1}$ and variance $\sigma^2 = 0.2\mu$. We further estimate the initial similarity weights $\lambda_{i,j}(0)$ based on the procedure discussed in Sec.~\ref{sss:graph}. 

\begin{table}[!t]
\caption{Global aggregations required by each scheme to reach a certain percentage of FedL model accuracy on MNIST (85\%) and F-MNIST (65\%) with $S = 6$ and varying $N$. Our smart sampling with offloading method consistently obtains the fastest training time.} 
\label{tab: mnist}
{\footnotesize
\begin{tabularx}{0.48\textwidth}{c *{6}{Y}}
\toprule[.2em]
\multirow{2}{*}{\bf{Sampling Method}} & \multicolumn{3}{c}{\bf{Devices (MNIST)}} & \multicolumn{3}{c}{\bf{Devices (F-MNIST)}} \\
\cmidrule(lr){2-4} \cmidrule{5-7}
& \bf{600} & \bf{700} & \bf{800} & \bf{600} & \bf{700} & \bf{800}\\
\midrule
Smart & 8 & 9 & 11 & 8 & 5 & 7\\
Random & \rule{5 mm}{0.2pt}& \rule{5 mm}{0.2pt} & \rule{5 mm}{0.2pt} & \rule{5 mm}{0.2pt} & 14 & 17\\
Heuristic & \rule{5 mm}{0.2pt} & 17 & \rule{5 mm}{0.2pt} & 14& 17& 7\\
\midrule
Smart w/ offload & 4 & 4 & 6 & 6 & 4 & 5\\
Random w/ offload & 11 & 14 & 10 & 14 & 12 & 13\\
Heuristic w/ offload & \rule{5 mm}{0.2pt} & 15 & \rule{5 mm}{0.2pt} & 9 & 11 & 5\\
\bottomrule
\end{tabularx}
}
\vspace{-5mm}
\end{table}


For the GCN-based sampling procedure, we train the model on small network realizations of ten devices. We consider sampling budgets of $S = 3$ to $6$, with thousands of training samples $E$ in each case. We save the resulting graph convolutional layer weights $\mathbf{Q}^{(1)}$ and $\mathbf{Q}^{(2)}$ for each choice of $S$ and reapply them on the larger target networks.


\begin{figure}[t]
\vspace{-2mm}
\centering
\hspace{-4mm}
\includegraphics[width=0.96\linewidth]{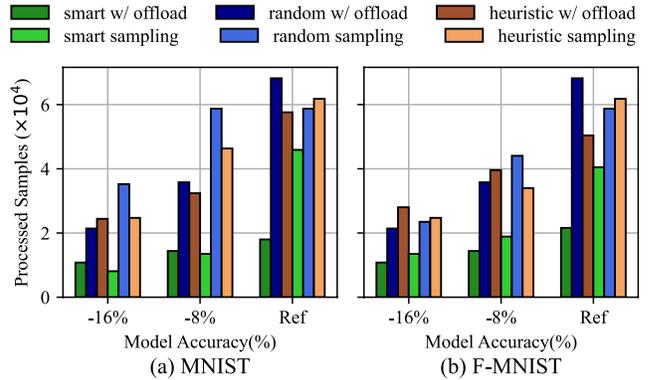}
\vspace{-2mm}
\caption{Number of samples processed by the schemes with and without offloading to reach within a certain percentage of a reference testing accuracy (Ref, 60\%). Our smart sampling with offloading methodology scales the best in terms of processing resources.}
\label{fig:bars}
\vspace{-6mm}
\end{figure}

\subsection{Results and Discussion} \label{ss:sim_baseline}

In the following experiments, we compare our methodology to several baseline sampling and offloading schemes. The three sampling strategies considered are \textit{smart}, \textit{random}, and \textit{heuristic}.
Smart sampling refers to our proposed method, random sampling is done by averaging the performance over five randomly sampled combinations of devices, and heuristic sampling selects the devices with the highest processing capacities.
Each of the three sampling schemes is either done \textit{without} or \textit{with offloading}. For smart sampling, our offloading methodology is used. For random sampling, we perform random offloading. For heuristic sampling, we perform a greedy offloading that maximizes the number of received data points for the device. A baseline of FedL with no sampling (i.e., all nodes active) and no offloading is also included.

\subsubsection{Model accuracy}
Figs.~\ref{fig:345_mnist} and~\ref{fig:345_fmnist} show FedL accuracy for both datasets in six different combinations ($S = 3, 4, 5$ and $N = 100, 200$). Overall, we see that the final accuracy obtained by our \textit{smart with (w/) offloading} scheme outperforms all of the other methods except for FedL in two cases (\ref{fig:345_mnist}a\&\ref{fig:345_mnist}d). The comparison with \textit{FedL all nodes} is remarkable as that leverages all of the devices in the network, while the sampling uses at most 5\% of them. 
Our smart data offloading methodology outperforms FedL due to two main reasons: (i) it minimizes data skew resulting from unbalanced label frequencies, and (ii) it ensures higher quality of local datasets at sampled nodes, which reduces bias caused by multiple local gradient descents.
Without offloading, the improvement obtained over the heuristic and random sampling strategies consistently exceeds 20\% for MNIST and 10\% for F-MNIST, which shows that sampling optimization still leads to considerable improvements when D2D is disabled.
On the other hand, our method with offloading obtains a substantial improvement over no offloading in most cases, whereas these differences are smaller for the heuristic and random methods. This emphasizes the importance of designing the sampling and offloading schemes for FedL jointly.


\subsubsection{Model convergence speed} We next compare the convergence speeds of our methodology to the other schemes in terms of the number of global aggregations needed to reach a certain percentage of the final accuracy of FedL with all nodes. Table~\ref{tab: mnist} compares the convergence speeds on MNIST (to reach 85\%) and F-MNIST (to reach 65\%), respectively, for $N = 600, 700, 800$ and $S = 6$. Overall, we see that our joint sampling and offloading methodology obtains significantly faster training speeds than the other methods, on average 40\% for MNIST and 50\% for F-MNIST. Enabling offloading is also seen to improve the convergence speeds of each sampling scheme; in fact, without offloading, several baseline cases fail to reach the given percentage of the FedL baseline.

\subsubsection{Resource utilization} Finally, we compare the resource utilization for the different schemes in terms of the total data processed across nodes in the network. Fig.~\ref{fig:bars} gives the results for each dataset, comparing the datapoints processed by methods with and without offloading to reach within a percentage of an arbitrary reference accuracy (60\%). We see that smart sampling (with or without offloading) outperforms the other schemes in all cases, which highlights the computational efficiency obtained by our method. As the accuracy level increases, our methodology constantly requires less datapoints compared to the other methods (on average 40\% fewer), emphasizing its ability to filter out duplicate data.


These experiments demonstrate that our joint optimization method exceeds baseline performances in terms of model accuracy, convergence speed, and resource utilization.

\section{Conclusion and Future Work}
 
\noindent In this paper, we developed a novel methodology to solve the joint sampling and D2D offloading optimization problem for FedL. Our theoretical analysis of the offloading subproblem produced new convergence bounds for FedL, and led to a sequential convex optimization solver. We then developed a GCN-based algorithm that determines the sampling strategy by learning the relationships between the network properties, the offloading topology, the sampling set, and the FedL accuracy. Our implementations using real-world datasets and IoT measurements from our testbed demonstrated that our methodology obtains significant improvements in terms of datapoints processed, training speed, and resulting model accuracy compared to several other algorithms, including FedL using all devices. Future investigations will consider the integration of realistic network characteristics on FedL.



\appendices
\vspace{-1mm}
\section{}\label{app:main1}
\noindent Since $\mathbf{v}_k(t)= \mathbf{v}_k(t-1) - \eta \nabla F(\mathbf{v}_k(t-1)|\mathcal{D}_{\mathcal{N}}(t))$, $\mathbf{w}_{\mathcal{S}}(t)  =  \mathbf{w}_{\mathcal{S}}(t-1) \hspace{-.5mm}- \hspace{-.5mm}\eta (\nabla F(\mathbf{w}_{\mathcal{S}}(t-1)|\mathcal{D}_{\mathcal{N}}(t)) \hspace{-.5mm}+\hspace{-.5mm} \zeta(\mathbf{w}_{\mathcal{S}}(t-1))$, we get:

\vspace{-5mm}
\small
\begin{align} \label{eq:t1_p1}
&\hspace{-2mm}\big\Vert \mathbf{w}_{\mathcal{S}}(t) - \mathbf{v}_{k}(t) \big\Vert \hspace{-1mm} = \hspace{-1mm} \Big\Vert \mathbf{w}_{\mathcal{S}}(t \hspace{-0.5mm}- \hspace{-0.5mm} 1) - \mathbf{v}_k(t\hspace{-0.5mm}-\hspace{-0.5mm}1) - \eta \zeta(\mathbf{w}_{\mathcal{S}}(t\hspace{-0.5mm}-\hspace{-0.5mm}1)) \nonumber\\
&\hspace{-2mm}- \eta\nabla F(\mathbf{w}_{\mathcal{S}}(t-1)|\mathcal{D}_{\mathcal{N}}(t)) +\eta \nabla F(\mathbf{v}_k(t-1) \vert \mathcal{D}_{\mathcal{N}}(t)) \Big\Vert.\hspace{-2mm}
\end{align}
\normalsize

\noindent We simplify \eqref{eq:t1_p1} through the following steps:
\vspace{-5mm}

\small
\begin{align}\label{eq:t1_bigboy}
& \Vert \mathbf{w}_{\mathcal{S}}(t) - \mathbf{v}_{k}(t) \Vert \overset{(a)}{\leq} \Vert \mathbf{w}_{\mathcal{S}}(t-1)-\mathbf{v}_k(t-1) \Vert \nonumber \\
&  + \eta \sum_{i \in \mathcal{N}} \frac{D_i(t-1)}{D_{\mathcal{N}}(t)} \Vert\nabla F(\mathbf{w}_{\mathcal{S}}(t-1)|\mathcal{D}_i(t-1)) \nonumber \\
& - \nabla F(\mathbf{v}_k(t-1)|\mathcal{D}_i(t-1)) \Vert +\eta \Vert \zeta(\mathbf{w}_{\mathcal{S}}(t-1)) \Vert \nonumber \\
&\overset{(b)}{\leq} \Vert \mathbf{w}_{\mathcal{S}}(t-1)-\mathbf{v}_k(t-1) \Vert + \eta \Vert \zeta(\mathbf{w}_{\mathcal{S}}(t-1))\Vert  \nonumber \\ 
&+\eta\beta \sum_{j \in \mathcal{N}}\frac{D_j(t-1)}{D_{\mathcal{N}}(t) D_{\mathcal{S}}(t) } \sum_{i \in \mathcal{S}}D_i(t-1)\Vert \mathbf{w}_{\mathcal{S}}(t-1) - \mathbf{v}_k(t-1)\Vert \nonumber \\
&\overset{(c)}{\leq} \Vert \mathbf{w}_{\mathcal{S}}(t-1) - \mathbf{v}_k(t-1) \Vert + \eta \Vert \zeta(\mathbf{w}_{\mathcal{S}}(t-1))\Vert \nonumber \\ 
&+\sum_{j \in \mathcal{N}}\frac{D_j(t-1)}{D_{\mathcal{N}}(t) D_{\mathcal{S}}(t) } \sum_{i \in \mathcal{S}}D_i(t-1)\frac{\delta_i(t)}{\beta}(2^{t-1-(k-1)\tau}-1)  \nonumber \\
&\overset{(d)}{\leq} \Vert \mathbf{w}_{\mathcal{S}}(t-1)-\mathbf{v}_k(t-1) \Vert + \frac{1}{\beta} \Vert \zeta(\mathbf{w}_{\mathcal{S}}(t-1))\Vert  \nonumber \\ 
&+\sum_{j \in \mathcal{N}}\frac{D_j(t-1)}{D_{\mathcal{N}}(t)} \frac{\delta_{\mathcal{S}}(t)}{\beta}(2^{t-1-(k-1)\tau}-1), 
\end{align}
\normalsize
\vspace{-4mm}

\noindent where $(a)$ results from expanding $\nabla F(\mathbf{v}_k(t-1)|\mathcal{D}_{\mathcal{N}}(t))$ and applying the triangle inequality repeatedly, $(b)$ follows from using the $\beta$-smoothness of the loss function and the triangle inequality, $(c)$ applies Lemma 3 from \cite{wang2019adaptive}, and $(d)$ uses the expanded form of $\delta_{\mathcal{S}}(t)$ in \eqref{th1:3}. We then rearrange~\eqref{eq:t1_bigboy}:
\begin{equation}
\begin{aligned}
&\Vert \mathbf{w}_{\mathcal{S}}(t)-\mathbf{v}_k(t) \Vert - \Vert \mathbf{w}_{\mathcal{S}}(t-1)-\mathbf{v}_k(t-1) \Vert  \\ 
&\leq \frac{\Upsilon(t,k)}{\beta} +\frac{1}{\beta} \Vert \zeta(\mathbf{w}_{\mathcal{S}}(t-1)) \Vert.
\end{aligned}
\end{equation}
\noindent 
Since $\Vert \mathbf{w}_{\mathcal{S}}(t) - \mathbf{v}_{k}(t)\Vert = 0$ when re-synchronization occurs at $t = k\tau$ , ~$\forall k \in \{1,\cdots,K \}$, we express $\Vert \mathbf{w}_{\mathcal{S}}(t)-\mathbf{v}_k(t) \Vert$ as:
\small
\begin{align}
&\Vert \mathbf{w}_{\mathcal{S}}(t)- \mathbf{v}_k(t) \Vert  = \sum_{y = (k-1)\tau+1}^{t}  \Vert \mathbf{w}_{\mathcal{S}}(y) - \mathbf{v}_k(y) \Vert -  \Vert \mathbf{w}_{\mathcal{S}}(y-1)\nonumber \\ &-  \mathbf{v}_k(y-1) \Vert 
 \leq \frac{1}{\beta} \hspace{-2mm} \sum_{y = (k-1)\tau+1}^{t} \left(\Upsilon(y,k) +\Vert \zeta(\mathbf{w}_{\mathcal{S}}(y-1)) \Vert \right). \label{eq:sample}\hspace{-2mm}
\end{align}
\normalsize

\vspace{-1mm}
\section{}\label{app:c1}
\noindent We first define $\theta_k(t) \triangleq F(\mathbf{v}_k(t))-F(\mathbf{w}^*(t)) \geq \epsilon$.
Since $F$ is $L$-Lipschitz, we apply the result of Theorem 1, and Lemmas 2 and 6 from \cite{wang2019adaptive} to obtain:
\vspace{-1.2mm}
\small
\begin{align}
& \theta_{\hat{K}+1}(t)^{-1} - \theta_{1}(0)^{-1} = \left(\theta_{\hat{K}+1}(t)^{-1} - \theta_{k+1}(\hat{K}\tau)^{-1} \right) \nonumber \\
& + \left(\theta_{\hat{K}+1}(\hat{K}\tau)^{-1} - \theta_{\hat{K}}(k\tau)^{-1} \right) + \left(\theta_{\hat{K}}(\hat{K}\tau)^{-1} - \theta_{1}(0)^{-1} \right)\nonumber\\
& \geq \left((t-\hat{K}\tau)\xi \eta \left(1-\frac{\beta \eta}{2}\right)\right) + \hat{K}\tau \xi \eta \left(1- \frac{\beta \eta}{2}\right) - \frac{L}{\beta \epsilon^2} \hat{\Upsilon}(\hat{K}) \nonumber\\
& - (\hat{K}-1) \frac{L}{\beta \epsilon^2}\hat{\Upsilon}(\hat{K}) 
= t\xi \eta \left(1-\frac{\beta \eta}{2}\right) - \hat{K} \frac{L}{\beta \epsilon^2} \hat{\Upsilon}(\hat{K}), \label{eq:c1eq1}\hspace{-3mm}
\end{align}
\normalsize
\noindent where $\hat{\Upsilon}(\hat{K}) \triangleq \sum_{y=(\hat{K}-1)\tau+1}^{\hat{K}\tau} \Upsilon(\hat{K},y) + \Vert \zeta(\mathbf{w}_{\mathcal{S}}(y-1))\Vert $. Using the assumptions from Lemma~2 of~\cite{wang2019adaptive} and the fact that $t \geq \hat{K}\tau$, the RHS of~\eqref{eq:c1eq1} is strictly positive, implying that $\hat{\Upsilon} (\hat{K}) \ll 1$ as $t\xi \eta (1-\frac{\beta \eta}{2})$ is very small. Next, defining $\varrho(t) \triangleq F(\mathbf{w}_{\mathcal{S}}(t)|\mathcal{D}_{\mathcal{N}}(t)) - F(\mathbf{w}^*(t)|\mathcal{D}_{\mathcal{N}}(t)) \geq \epsilon$, we get:

\small
\begin{equation} \label{eq:cleq2}
\hspace{-0mm}
\begin{aligned}
&\varrho(t)^{-1} - \frac{1}{\theta_{\hat{K}+1}(t)}  = \frac{F(\mathbf{v}_{\hat{K}}(t)|\mathcal{D}_{\mathcal{N}}(t)) - F(\mathbf{w}_{\mathcal{S}}(t)|\mathcal{D}_{\mathcal{N}}(t))}{\theta_{\hat{K}+1}(t) \varrho(t) } \\
& \geq \frac{-L}{\beta \epsilon^2} \sum_{y=(\hat{K}-1)\tau+1}^{\hat{K}\tau} \Upsilon(\hat{K},y) + \Vert \zeta(\mathbf{w}_{\mathcal{S}}(y-1)) \Vert.
\end{aligned}
\hspace{-6mm}
\end{equation}
\normalsize

\noindent Combining \eqref{eq:c1eq1} and \eqref{eq:cleq2}, and since $\theta_1(0) > 0$, we get $\varrho(t)^{-1} \geq t \xi \eta \left(1 - \frac{\beta \eta}{2}\right) - \frac{(\hat{K}+1)L}{\beta \epsilon^2} \hat{\Upsilon}(\hat{K})$, taking the reciprocal of which leads to \eqref{eq:cl1_result}.



\vspace{-1mm}
\section{}\label{app:prop1}

\noindent Since $\nabla F(\mathbf{w}_{\mathcal{S}}(t)|\mathcal{D}_i(t))$ is the average of $\nabla F(\mathbf{w}_{\mathcal{S}}(t),x_d,y_d)$, $\forall (x_d,y_d) \in \mathcal{D}_i(t)$, we apply the central limit theorem to view $\nabla F(\mathbf{w}_{\mathcal{S}}(t)|\mathcal{D}_i(t))$ as $D_i(t)$ samples of $\nabla F(\mathbf{w}_{\mathcal{S}}(t),x_d,y_d)$ from a distribution with mean $\nabla F(\mathbf{w}_{\mathcal{S}}(t)|\mathcal{D}_{\mathcal{N}}(t))$. Then, \eqref{eq:lemma} can be upper bounded using the definition in~\eqref{def:e}:
\vspace{-3mm}

\small
\begin{align}
&\Vert\nabla F(\mathbf{w}_{\mathcal{S}}(t)|\mathcal{D}_i(t)) - \nabla F(\mathbf{w}_{\mathcal{S}}(t)|\mathcal{D}_{\mathcal{N}}(t)) - \zeta(\mathbf{w}_{\mathcal{S}}(t))\Vert \nonumber \\
&\leq \Vert \zeta(\mathbf{w}_{\mathcal{S}}(t)) \Vert  + \frac{\gamma}{\sqrt{D_i(t)}} \leq \Bigg\Vert \frac{-1}{D_{\mathcal{N}}(t)} \sum_{i \in \hat{\mathcal{S}}} D_i(t) \nabla F(\mathbf{w}_{\mathcal{S}}(t)|\mathcal{D}_i(t))\nonumber  \\
& + \frac{D_{\mathcal{N}}(t)-D_{\mathcal{S}}(t)}{D_{\mathcal{N}}(t) D_{\mathcal{S}}(t)} \sum_{i \in \mathcal{S}} D_i(t) \nabla F(\mathbf{w}_{\mathcal{S}}(t)|\mathcal{D}_i(t))   \Bigg\Vert + \frac{\gamma}{\sqrt{D_i(t)}}.
\hspace{-4mm}
\end{align}
\normalsize
Applying the triangle inequality on the above gives the result. 

\newpage

\balance
\bibliographystyle{IEEEtran}
\bibliography{References}

\end{document}